\newtheorem{definition}{Definition}
\newtheorem{theorem}{Theorem}
\newtheorem{lemma}{Lemma}
\newtheorem{example}{Example}
\newtheorem*{remark}{Remark}
\newtheorem{fact}{Fact}
\DeclareMathOperator*{\E}{\mathbf{E}}
\newcommand{\dd}{\mathrm{d}}
\begin{document}
\title{Revenue Maximization Mechanisms for an Uninformed Mediator with Communication Abilities}

\author{
Zhikang Fan \\ Renmin University of China\\ \texttt{fanzhikang@ruc.edu.cn}
\and
Weiran Shen\\ Renmin University of China \\ \texttt{shenweiran@ruc.edu.cn}
}

\maketitle

\begin{abstract}
Consider a market where a seller owns an item for sale and a buyer wants to purchase it. Each player has private information, known as their type. It can be costly and difficult for the players to reach an agreement through direct communication. However, with a mediator as a trusted third party, both players can communicate privately with the mediator without worrying about leaking too much or too little information. The mediator can design and commit to a multi-round communication protocol for both players, in which they update their beliefs about the other player's type. The mediator cannot force the players to trade but can influence their behaviors by sending messages to them.

We study the problem of designing revenue-maximizing mechanisms for the mediator. We show that the mediator can, without loss of generality, focus on a set of direct and incentive-compatible mechanisms. We then formulate this problem as a mathematical program and provide an optimal solution in closed form under a regularity condition. Our mechanism is simple and has a threshold structure. Additionally, we extend our results to general cases by utilizing a variant version of the ironing technique. In the end, we discuss some interesting properties revealed from the optimal mechanism, such as, in the optimal mechanism, the mediator may even lose money in some cases.
\end{abstract}
\setcounter{tocdepth}{1} 

\onehalfspacing

\newpage 

\section{Introduction}
\label{sec:introduction}
Consider a market where a seller wishes to sell an item to a buyer. The item's quality is only known to the seller, while the buyer's valuation of the item depends on both the item's quality and the buyer's type. Both parties are interested in knowing the other player's private information but may be unwilling to disclose too much about their own. On one hand, revealing too much information gives the other player an informational advantage. If the seller knows the buyer's type, they can set a price accordingly to extract the full surplus as their revenue. Similarly, if the buyer knows the quality of the item, they can certainly utilize the information to make a better purchase decision. On the other hand, revealing no information may hinder the occurrence of a trade. For example, a buyer may deem the price unworthy if they have information about the item's quality. Therefore, this lack of information makes it difficult for both players to come to a trade agreement on their own. Even after the purchase decision is made, the players also need to deal with extensive paperwork, which imposes high costs on both of them. 

This problem is ubiquitous in real-world applications and has led to the emergence of mediators between the two sides. For instance, in real estate markets, a broker or a realtor often acts as a mediator between a seller and a buyer. In fact, according to the National Association of Realtors~\citep{realtor}, in 2021, 87\% of buyers purchased their homes, and 90\% of sellers sold their homes through an agent or a broker in the US. Such a mediator typically benefits both sides and facilitates trades that may not happen on their own.

Apart from real estate markets, mediators also play a crucial role in numerous other applications: online ad exchanges match publishers with advertisers based on the features of both sides; ride-sharing platforms connect passengers with drivers based on their geographical information; lodging services match tourists with rooms based on their needs.

The aforementioned observations prompt us to investigate how a mediator can maximize their revenue by charging fees for providing such a service. We consider a model where the seller and the buyer can interact with a mediator via communication. Both players have private types that are drawn from publicly known distributions. The mediator has no access to the players' private types, but can privately communicate with them based on a specific communication protocol that may involve multiple rounds. Players update their beliefs about the other player's type after each round. In the end, the mediator decides whether to recommend a trade or not, and both players must decide whether to follow the recommendation. We assume there is no outside option for both players since there may be a significantly high cost involved (from negotiation, paperwork, etc.) if they choose to trade on their own.

However, unlike most standard mechanism design problems, the mediator cannot force the players to follow the trade recommendation. As a result, the mediator needs to carefully design the communication protocol and the pricing strategy to incentivize both players to follow their recommendations while maximizing revenue. 

In this work, we take the perspective of the mediator and explore the design of a communication protocol that enables the mediator to elicit and reveal information. We also aim to determine how the mediator can price their service to maximize the expected revenue.

\subsection{Our Contributions}
We formally define the mechanism space for this problem. Any mechanism in our mechanism space induces an extensive-form game with incomplete information once the mediator commits to a specific protocol, where each player has a prior belief about the other player's private information. Thus, we use the Perfect Bayesian equilibrium as our solution concept and prove a variant of the celebrated revelation principle, tailored for our setting. We show that the mediator can, without loss of generality, focus on the set of direct and incentive-compatible mechanisms. 

Based on the above results, we formulate this problem as a mathematical program and provide an optimal solution in closed form, subject to a regularity condition. Our optimal mechanism falls into the category of \emph{threshold mechanisms}. For those irregular cases, we adopt a variant of the ironing procedure proposed by Myerson \cite{myerson1981optimal} and prove that the ironed functions give the same revenue with respect to the original functions. Lastly, we discuss some interesting properties of the optimal mechanism. For example, a somewhat counterintuitive finding is that sellers with higher-quality items are less likely to make a sale. Additionally, in some cases, the mediator may incur a loss.

\subsection{Additional Related Works}
\paragraph{Information selling.}
Closest to our paper in the literature of information selling is the work by~\citet{Liu2021OptimalPO}. They consider sequential interactions between an information seller and an information buyer and study the optimal pricing problem. However, our paper differs from theirs in that we consider a setting where a mediator, who has no information advantage, interacts with two privately informed players. \citet{babaioff2012optimal} consider a similar setting and provide conditions under which the information seller can achieve optimal revenue by using a one-round mechanism. \citet{chen2020selling} consider the same setting as~\cite{babaioff2012optimal}, but with additional budget constraints. 

\paragraph{Bilateral trade.} Another relevant but significantly different direction is bilateral trade. The seminal work~\cite{myerson1983efficient} characterizes the set of mechanisms for a profit-maximizing monopolistic broker in bilateral trade. However, in their setting, both players cannot opt out of the mechanism after participation, and thus it can be modeled as a single-round mechanism. In contrast, in our setting, players need to decide whether to stay in the mechanism or follow the mediator's recommendations. Therefore, the revelation principle based on Bayesian Nash equilibrium does not apply to our setting. There is also a line of work focusing on improving efficiency through interaction~\cite{alon2015welfare,dobzinski2014economic}. More recently,~\citet{Mao2022InteractiveCI} consider a setting where a buyer and a seller can establish a communication protocol beforehand, and both of them have commitment power. Following this protocol, the players send verifiable signals to each other in turn. However, in our setting, only the mediator has commitment power, and each player privately communicates with the mediator to update their beliefs.

\paragraph{Information design.} Our work is also related to the literature on information design, commonly referred to as ``Bayesian persuasion''~\cite{kamenica2011bayesian}. In a standard Bayesian persuasion model, an informed sender who has no ability to take actions aims to persuade an uninformed receiver to take actions in their favor by way of signaling. Recently, there have also been studies on its variants, e.g., an informed sender persuades an informed receiver~\cite{kolotilin2017persuasion}; an informed sender persuades a set of uninformed receivers~\cite{Castiglioni2021SignalingIB,xu2020tractability}. In contrast, our problem lies in the area where an information sender with no informational advantage privately sends signals to two informed receivers. There is also a body of work that studies Bayesian persuasion with mediators~\cite{Shen2018ACC,Hennigs2019ConflictPB,Mahzoon2022HierarchicalBP,Arieli2022BayesianPW}. However, the role of the mediators in these settings is significantly different from ours.

\paragraph{Communication equilibrium.} Another relevant concept is communication equilibrium~\cite{forges1993five}, where the uninformed mediator can recommend actions to the players based on the information obtained from them. We apply this notion to a real-world application. For a more detailed discussion on the connection between Bayesian persuasion and incomplete information correlated equilibrium, we refer interested readers to~\cite{Bergemann2016InformationDB}.

\section{Preliminaries}
\label{sec:preliminary}

Consider a market with two agents: a \emph{seller} $s$ and a \emph{buyer} $b$. The seller has an item for sale, while the buyer seeks to purchase it. Let $t\in T$ be the private type of the buyer. Denote by $q\in Q$ the quality of the item, which can be viewed as the private type of the seller. We assume that $t$ and $q$ are random variables independently drawn from publicly known distributions $F(t)$ and $G(q)$, respectively. Additionally, we assume that both $F(t)$ and $G(q)$ are differentiable, with probability density functions $f(t)$ and $g(q)$ that are supported on $[t_1,t_2]$ and $[q_1, q_2]$, respectively.

Let $v(t,q)$ denote the valuation of a buyer of type $t$ obtaining an item of quality $q$. We assume that the function $v(t,q)$ is monotone non-decreasing in the buyer's type $t$ for any $q\in Q$. In particular, we assume that $v(t, q)$ is linear in $t$ and takes the form $v(t, q) = \alpha_1(q)t + \alpha_2(q)$, with $\alpha_1(q)>0$ for all $q\in Q$. The seller has a reserve price $r(q)$ for the item, which is assumed to be proportional to the quality of the item, i.e., $r(q)=kq$ for some constant $k\ge 0$. Alternatively, $r(q)$ can be interpreted as the seller's own valuation of the item.

Now suppose that there exists a third party (i.e., the \emph{mediator}) who has a private communication channel through which they can communicate with either the buyer or the seller privately, collecting information from the players or revealing information to them. In the end, the mediator will either recommend ``trade'' with a suggested price or ``not trade'' at all. The mediator can also charge them for providing the service. 

Note that the revenue received by the mediator comes solely from charging fees to the agents and is not related to the actual trade price. However, we can simplify our analysis by assuming that the buyer pays the trade price to the mediator, who then pays the seller. With this modification, the utilities of all three parties remain the same, and all monetary transfers occur only between the mediator and the two agents. Throughout the paper, we make the assumption that the seller and the buyer cannot trade with each other without the mediator, as the cost of reaching a trade agreement and handling all the paperwork is formidably high. Therefore, whenever the mediator recommends ``not trade'', the agents cannot trade on their own, even if the trade would benefit both of them. 

\section{Mechanism Space}
\label{sec:space}
We investigate the problem of designing revenue-maximizing mechanisms for the mediator. In this section, we will describe the set of mechanisms considered in this paper. Therefore, the optimal mechanism is the one that yields the highest expected revenue among the set of mechanisms.

\begin{definition}[General communication protocol]
\label{def:general_mechanism}
A general communication protocol is a mechanism that induces a finite extensive-form game between the agents. The game can be described by a game tree, where each non-leaf node $h$ is one of the following types:
\begin{itemize}
    \item {\bfseries buyer node}, where the buyer chooses an action;
    \item {\bfseries seller node}, where the seller chooses an action;
    \item {\bfseries signaling-buyer node}, where the mediator privately sends a message to the buyer. Any such node $h$ has a prescription of the mediator's behavior, which is a probability distribution $\psi_h$ over the set of its children $C(h)$;
    \item {\bfseries signaling-seller node}, which is analogous to the signaling-buyer node but has prescribed behavior $\psi_h$;
    \item {\bfseries charging-buyer node}, which has only one child node and is associated with a monetary transfer (possibly negative) $t_h$ from the buyer to the mediator;
    \item {\bfseries paying-seller node}, which is similar to the above node except that the transfer is from the mediator to the seller;
    \item {\bfseries recommendation node}, where the mediator decides whether to recommend the agents to trade. We allow for randomized recommendations. Therefore, each recommendation node has two children (one for ``trade'' and the other for ``not trade''), and is associated with the mediator's behavior, which is a probability distribution over the two children;
    \item {\bfseries buyer decision node}, where the buyer decides whether to trade when receiving ``trade'' recommendation. Once the buyer refuses to trade, the game ends and goes to a leaf (terminal) node;
    \item {\bfseries seller decision node}, which is analogous to the above node except that the decision is made by the seller.
\end{itemize}
We require that the path from the root to any terminal node include one and only one recommendation node. Also, we require at least one buyer decision node and one seller decision node after the child node ``trade'' of any recommendation node to guarantee the players' right to reject. Except at the recommendation node, any interaction between the buyer and the mediator cannot be observed by the seller, and vice versa. 
\end{definition}

Our mechanism space generalizes a similar definition called the ``generic interactive protocol'' proposed by~\citet{babaioff2012optimal}. Similar to other mechanism design problems, our mechanism proceeds as follows:
\begin{enumerate}
    \item The mediator announces the mechanism to the agents;
    \item The agents play the game described by the game tree.
\end{enumerate}

In our definition, each node involving the mediator is associated with a specific behavior. Therefore, the induced game tree is only well-defined once both the seller and the buyer are aware of the mediator's intended behaviors. We make the standard assumption that the mediator has commitment power and will behave exactly as committed.

Note that the buyer node and the seller node can be regarded as nodes where the two agents send messages to the mediator since their actions rely on their private information. In fact, in a direct mechanism (which will be defined later), the action sets for the agents are precisely their type sets.

\subsection{Game Re-Formulation and Solution Concept}
Since we assume that the mediator has commitment power, the induced game is actually played between the seller and the buyer. Thus, given a mechanism, we can re-formulate the game and view the mediator's actions as chance moves. 

Formally, given a mechanism $M$, a corresponding two-player extensive-form game can be defined as a tuple $\Gamma = \langle  N, T, H, Z, \mathcal{H}, A, \mathcal{I}, u \rangle$, where

\begin{itemize}
    \item $N = \{b, s\}$ is the set of players;
    \item $T$ is a game tree, obtained by adding a \emph{nature} node to the tree described by $M$, where nature chooses the types of players at the beginning of the game;
    \item $H$ is a set of non-terminal nodes;
    \item $Z$ is a set of terminal nodes;
    \item $\mathcal{H} = \{H_0, H_s, H_b\}$ is a partition of $H$, where $H_0$ is the set of chance move nodes, and $H_s$ and $H_b$ are the sets of nodes where the seller and the buyer moves, respectively;
    \item $A$ is a function that maps a non-terminal node $h\in H$ to a set of available actions $A(h)$;
    \item  $\mathcal{I}= \{\mathcal{I}_i\}_{i\in N}$ is a collection of information partitions, where each $\mathcal{I}_i$ is a partition of $H_i$. An element $I \in \mathcal{I}_i$ is called an \emph{information set} of player $i$. Every information set $I$ satisfies $A(h) = A(h'), \forall h, h' \in I$. Thus for any information set $I\in \mathcal{I}_i$, we denote by $A(I)$ the set $A(h)$ for all $h \in I$;
    \item $u = (u_i)_{i\in N}$ is a collection of payoff functions, where each $u_i:Z \mapsto \mathbb{R}$ is the payoff function of player $i$.
\end{itemize}
In the game tree $T$, any node can be uniquely determined by the path from the root to the node. This path is also called the history of the node. So with a slight abuse of notation, we use $h$ or $z$ to denote both a node and its history.

We assume quasi-linear utility for both the buyer and the seller. Thus, we have 
\begin{align*}
    u_b(z)=
    \begin{cases}
    v(t, q)-\tau_b(z) & \text{if $z$ leads to a trade}\\
    -\tau_b(z) & \text{otherwise}
    \end{cases},
\end{align*}
where $\tau_b(z)$ is the total payments made by the buyer along the path from the root to node $z$, and $t$ and $q$ can be inferred from $z$, as $z$ is a history including chance moves. Note that trade occurs if and only if the mediator recommends ``trade'' and both agents choose to follow the recommendation. Similarly, for the seller, we have:
\begin{align*}
    u_s(z)=
    \begin{cases}
    \tau_s(z) & \text{if $z$ leads to a trade}\\
    \tau_s(z) + r(q) & \text{otherwise}
    \end{cases},
\end{align*}
where $\tau_s(z)$ is the total payments from the mediator to the seller.

A \emph{strategy} $\beta_i(I)$ of player $i$ is a function that assigns a probability distribution over $A(I)$ to each information set $I\in \mathcal{I}_i$. Let $B_i$ denote the set of all possible strategies of player $i$. Given a strategy profile $\beta = (\beta_i)_{i\in N}$, one can easily calculate the probability $\rho(z|\beta)$ of reaching a terminal node $z$ using the structure of the game tree.

A \emph{belief} is a function $\mu: H \mapsto [0,1]$ that assigns to each node $h \in H$ a probability such that the probabilities of the node in any information set sum up to 1, i.e., $\sum_{h\in I}\mu(h) = 1, \forall I \in \mathcal{I}_i$, for all players $i\in N$. The belief $\mu(h)$ can be interpreted as the probability that the player believes they are at node $h$ when $I$ is reached.

An \emph{assessment} is a combination of a strategy profile and a belief $(\beta, \mu)$. Given an information set $I$ of player $i$, let $Z_I$ be the set of all terminal nodes that can be reached from some nodes in $I$. Given an assessment $(\beta, \mu)$ and information set $I\in \mathcal{I}_i$, the conditional probability of reaching a terminal node $z\in Z_I$ is denoted as $\rho(z|\beta,\mu, I)$. Then, the conditional expected payoff of player $i$ for $(\beta, \mu)$ at information set $I$ is defined as $U_{i,I}(\beta|\mu) = \sum_{z\in Z_I} \rho(z|\beta, \mu, I)u_i(z)$.

Now we are ready to define the solution concept used in this paper.

\begin{definition}[Perfect Bayesian Equilibrium, PBE]
An assessment $(\beta^*, \mu^*)$ is a perfect Bayesian equilibrium of $\Gamma$ if the following conditions hold for each player $i\in N$.
\begin{itemize}
    \item \emph{Sequential rationality}. For every information set $I \in \mathcal{I}_i$ and every $\beta_i \in B_i$, 
    \begin{gather*}
        U_{i,I}(\beta_i^*, \beta^*_{-i}|\mu^*) \ge U_{i,I}(\beta_i,\beta^*_{-i}|\mu^*);
    \end{gather*}
    \item \emph{Consistency}. For every information set $I \in \mathcal{I}_i$ and every $h \in I$, 
    \begin{gather*}
        \mu^*(h) = \frac{\rho(h|\beta^*)}{\sum_{h'\in I}\rho(h'|\beta^*) },
    \end{gather*}
    where $\rho(h'|\beta^*)$ is the probability of reaching node $h'\in I$ given $\beta^*$.
\end{itemize}
\end{definition}

\begin{remark}
Note that the above belief update equation is only well-defined if $\sum_{h'\in I} \rho(h'|\beta^*) > 0$. Otherwise, $\mu^*(h)$ can be arbitrary, as the probability of reaching the information set is 0.

Simply put, the players are sequentially rational if they always maximize their utilities based on their beliefs, and consistency requires that the players update their beliefs using the Bayes rule.
\end{remark}

\subsection{Direct Mechanisms and the Revelation Principle}
In this section, we recast the celebrated revelation principle in our setting, showing that the mediator can narrow their focus to the set of direct and incentive-compatible mechanisms without loss of generality.

Before defining direct mechanisms, we first introduce signaling schemes, which formalize how the mediator reveals information. Given a signal set $\Sigma$, a signaling scheme $\pi: T \times Q \mapsto \Delta(\Sigma)$ is a mapping from the players' type profile to a distribution over the signal set $\Sigma$. This way of revealing information is also called ``Bayesian persuasion'' or an ``experiment'' in some research~\cite{kamenica2011bayesian}.

Upon receiving signal $\sigma$, the buyer with type $t$ updates their belief about the seller's type $q$ using the standard Bayes rule:
\begin{align*}
    g(q|\sigma, t) &= \frac{\pi(\sigma|t, q)g(q)f(t)}{\int_{q'\in Q}\pi(\sigma|t, q')g(q')f(t) \,\mathrm{d} q'} \\
    &= \frac{\pi(\sigma|t, q)g(q)}{\int_{q'\in Q}\pi(\sigma|t, q')g(q') \,\mathrm{d} q'}.
\end{align*}
Similarly, the seller can also obtain a posterior belief:
\begin{align*}
    f(t|\sigma, q) = \frac{\pi(\sigma|t, q)f(t)}{\int_{t'\in T}\pi(\sigma|t',q)f(t') \,\mathrm{d} t'}.
\end{align*}

In this paper, we use signal schemes to describe how the mediator sends recommendations. The signal set $\Sigma$ has only two signals\footnote{One might think that the reason for this is that we only need to consider the so-called ``responsive experiments'' as shown in \cite{bergemann2018design}. However, this is not true, since in our paper, the receivers' available actions depend on the actual signal sent by the sender. Thus their result does not apply here.}, i.e., $\Sigma=\{0, 1\}$, where 0 corresponds to ``not trade'' while 1 corresponds to ``trade''. This is because we assume that the two agents cannot trade without the mediator. Thus the mediator should clearly specify whether or not they would recommend them to trade.

We define the payment functions for the two players as follows: $P_b: T \mapsto \mathbb{R}$ is the payment made by the buyer of type $t$ to the mediator, and $P_s: Q \mapsto \mathbb{R}$ is the payment from the mediator to the seller with quality $q$. 

Now we are ready to define direct mechanisms.
\begin{definition}[Direct Mechanism]
A direct mechanism is described by a tuple $(\pi, P_b, P_s)$, and proceeds as follows:
\begin{enumerate}
    \item The mediator announces $\pi$, $P_b$ and $P_s$;
    \item The buyer and seller privately report (possibly untruthfully) their types $t$ and $q$ to the mediator;
    \item The mediator decides whether to recommend the agents to trade according to $\pi(t, q)$;
    \item Upon receiving the signal, the players update their belief, and decide whether to follow the recommendation;
    \item If the trade occurs, the mediator charges the buyer $P_b(t)$ and pays the seller $P_s(q)$.
\end{enumerate}
\end{definition}
\begin{remark}
In the above definition, although the mediator already knows both $t$ and $q$ when calculating the payments, the payment functions $P_b(t)$ and $P_s(q)$ still only depend on their respective types $t$ and $q$. This is because the dependency on both $t$ and $q$ would reveal additional information.
\end{remark}

\begin{definition}[Incentive Compatibility]
A direct mechanism is incentive compatible if, for both players, reporting their true types and following the mediator's recommendation form a PBE of the game induced by the mechanism.
\end{definition}

One can easily verify that a direct mechanism can also be represented as a general communication protocol. We now show that the mediator can, without loss of generality, focus solely on direct and incentive compatible mechanisms.

\begin{theorem}
\label{theorem:revelation}
For any general mechanism $M$, there exists a direct, incentive compatible mechanism that achieves the same expected revenue as in any PBE of the game induced by $M$.
\end{theorem}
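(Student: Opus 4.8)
The plan is to follow the classic revelation-principle template, adapted to the multi-round, no-commitment-for-players setting. Fix a general mechanism $M$ and a PBE $(\beta^*,\mu^*)$ of the induced game $\Gamma$. The idea is to build a direct mechanism $(\pi, P_b, P_s)$ that simulates, for each reported type profile $(t,q)$, the play of $M$ under $\beta^*$. Concretely, first I would observe that once the players' types are fixed and the players play their equilibrium strategies, the only remaining randomness in $M$ comes from the mediator's own prescribed behaviors $\psi_h$ at signaling and recommendation nodes (these are the chance moves in $\Gamma$). So for each $(t,q)$ there is a well-defined distribution over terminal nodes $z$; pushing this forward, define $\pi(1\mid t,q)$ to be the probability that this play reaches a terminal node corresponding to an actual trade (mediator recommends ``trade'' and both decision nodes follow), and $\pi(0\mid t,q)$ the complement. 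Define $P_b(t)$ to be the expected total payment extracted from the buyer along the play when the buyer's type is $t$, averaged over $q\sim G$ \emph{conditioned on trade occurring} — and symmetrically $P_s(q)$ — so that the per-trade payment functions reproduce the same ex ante transfers as $M$. One must be slightly careful that $M$'s payments may occur at charging/paying nodes off the trade path; I would argue (or assume, citing the utility definitions in the excerpt) that w.l.o.g. all net transfers can be collected conditional on the trade decision, or more simply fold the unconditional part into the distributional bookkeeping; this is the kind of routine normalization I would not grind through.

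Next, I would verify that the constructed direct mechanism yields the same expected revenue: by construction the mediator's expected net receipts $\E_{t,q}[\pi(1\mid t,q)(P_b(t)-P_s(q))]$ equal the expected net transfers to the mediator in $M$ under $(\beta^*,\mu^*)$, since we defined $\pi$ and the $P$'s precisely as the corresponding marginals of that equilibrium play. The substantive part is incentive compatibility: I must show that ``report truthfully and obey the recommendation'' is itself a PBE of the direct mechanism. The argument is a deviation-simulation argument. Suppose the buyer of type $t$ considers (i) misreporting $t'\neq t$ and/or (ii) disobeying a received recommendation. Any such deviation in the direct mechanism can be mapped to a deviation in the original game $\Gamma$: misreporting $t'$ corresponds to the buyer playing, inside $M$, the continuation strategy $\beta^*_b$ prescribed for type $t'$ (this is feasible because in $M$ the buyer's actions are just messages and responses, and the mediator cannot observe the seller's side, so the buyer can indeed pretend to be $t'$ throughout), and disobeying corresponds to the buyer's decision-node deviation that is already available in $M$. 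Since $(\beta^*,\mu^*)$ is a PBE of $\Gamma$ — in particular sequentially rational at every information set, including the buyer decision nodes — none of these deviations is profitable there; because the payoffs and reaching probabilities are preserved by the construction, none is profitable in the direct mechanism either. The belief consistency condition is inherited: the posterior a truthful buyer holds upon seeing signal $\sigma$ is exactly the Bayes update computed in the excerpt's formula for $g(q\mid\sigma,t)$, which by construction coincides with the belief $\mu^*$ the equilibrium assigns at the corresponding information set of $M$, so sequential rationality is evaluated against the right beliefs. The same argument runs symmetrically for the seller.

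The main obstacle, and the step I would spend the most care on, is the deviation-mapping for \emph{misreports combined with adaptive disobedience}: a player in the direct mechanism reports $t'$, sees the resulting recommendation, and then decides whether to follow it — and I need the image of this composite deviation to be a genuine, available strategy in the game $\Gamma$ induced by $M$, so that PBE sequential rationality of $\beta^*$ actually forbids it. This requires arguing that the buyer's information in the direct mechanism (own type $t$, reported type $t'$, received signal $\sigma$) is no finer than — indeed corresponds to a coarsening or to one of the information sets of — the buyer's information in $M$ when playing $\beta^*_b(\cdot\mid t')$, so that every response the buyer can condition on in the direct mechanism is also something a deviating buyer could condition on in $M$. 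The structural restrictions in Definition~\ref{def:general_mechanism} — exactly one recommendation node on every path, a buyer decision node after every ``trade'' child, and the unobservability of cross-player interaction — are exactly what make this correspondence go through, and I would lean on them explicitly. Once that correspondence is nailed down, the rest is bookkeeping.
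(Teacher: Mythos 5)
Your proposal is correct and follows essentially the same route as the paper's proof: simulate the fixed PBE $(\beta^*,\mu^*)$, set $\pi(1\mid t,q)$ to the equilibrium probability of (recommended and accepted) trade, fold all of $M$'s transfers---including those off the trade path---into per-trade payments $P_b(t),P_s(q)$ normalized by the trade probability, verify that truthful play reproduces the same utilities and revenue, and map any misreport (with possible disobedience) back to the deviation in $\Gamma$ of ``pretending to be'' the misreported type, which sequential rationality of $\beta^*$ rules out. If anything, you are more explicit than the paper about the misreport-plus-adaptive-disobedience correspondence, which the paper treats tersely; the substance is the same.
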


\section{Problem Analysis}
In this section, we analyze the problem and formulate it as a mathematical program. For simplicity, we use $\pi(t, q)$ to denote the probability of sending signal 1, when the reported type profile is $(t, q)$. 

The mediator's goal is to design $\pi(t, q)$, $P_b(t)$ and $P_s(q)$ to maximize their expected revenue:
\begin{gather*}
	\int_{q\in Q}\int_{t\in T} \pi(t, q)[P_b(t) - P_s(q)]f(t)g(q) \,\dd t \dd q.
\end{gather*}
We also need to pose some constraints to ensure that the agents are willing to participate and report truthfully.

\paragraph{Individual rationality (IR).} 
The utility of a buyer who chooses not to participate is 0. Therefore, the expected utility of a buyer with type $t$ who reports truthfully and follows the recommendation is:
\begin{align}
		U_b(t) &=  \E_{q\sim G}[\pi(t, q)(v(t, q) - P_b(t)) ] \nonumber\\
    	&= \int_{q\in Q} \pi(t, q)[v(t, q) - P_b(t)]g(q) \,\mathrm{d} q.\label{eq:utility buyer}
\end{align}
Note that the seller values the item $r(q)$ even if the trade does not occur. Thus the expected utility of a seller with type $q$ reporting truthfully and following the recommendation is:
\begin{align}
		U_s(q) &=  \E_{t\sim F}[\pi(t, q)P_s(q) ] + \E_{t\sim F}[(1-\pi(t, q))r(q)] \nonumber\\
    	&= \int_{t\in T} \pi(t, q)[P_s(q) - r(q)]f(t) \,\mathrm{d} t +r(q).\label{eq:utility seller}
\end{align}
Therefore, to ensure the buyer's participation, we require:
\begin{align}
	\int_{q\in Q} \pi(t, q)[v(t, q) - P_b(t)]g(q) \,\mathrm{d}q \ge 0. \label{eq:IR buyer}
\end{align}
Similarly, for the seller, we have:
\begin{align}
\label{eq:IR seller}
	\int_{t\in T} \pi(t, q)[P_s(q) - r(q)]f(t) \,\mathrm{d} t +r(q) \ge r(q) .
\end{align}

For convenience, we define the notion of ``seller surplus'', which expresses the seller's additional utility gain from participating in the mechanism, as follows:
\begin{align}
	SU(q) = \int_{t\in T}\pi(t, q)[P_s(q) - r(q)]f(t) \,\mathrm{d}t.
\end{align}
Therefore, the IR constraint for the seller is equivalent to non-negative seller surplus. 

\paragraph{Incentive compatibility (IC).} To satisfy the IC constraint, we need two steps. The first step is to ensure that following the mediator's recommendation is the best option for both agents, assuming that they truthfully reported their types in previous steps. We refer to this property as the \emph{obedience} property. The second step is to ensure that both agents obtain a higher utility from truthfully reporting their types than misreporting any other types, even if they choose not to follow the recommendation.

The first step requires that for each agent, choosing to trade maximizes the utilities when signal 1 (the ``trade'' signal) is received. For the buyer, we have:
\begin{align*}
    \int_{q\in Q} g(q|1, t)  v(t, q) \,\dd q - P_b(t) \ge 0.
\end{align*}
With some simple algebraic manipulations, we obtain:
\begin{align}
\label{eq:obedience-buyer}
     \int_{q\in Q} \pi(t, q)[v(t, q) - P_b(t)]g(q) \,\mathrm{d}q \ge 0.
\end{align}
Similarly, for the seller, we have:
\begin{align}
\label{eq:obedience-seller}
  \int_{t\in T} \pi(t,q)[P_s(q) - r(q)] f(t) \,\mathrm{d}t \ge 0.
\end{align}
Note that when the mediator sends signal 0 (the ``not trade'' signal), we do not need to guarantee obedience, as we assume that the agents cannot trade without the mediator. 

It is interesting to observe that the above obedience constraints turn out to be exactly the same as the IR constraints. Thus we can safely ignore the obedience constraints.

For the second step, we first consider the constraints for the buyer. We still only need to consider the case when signal 1 is received, as the agents cannot trade without the mediator. Upon receiving signal 1, the expected utility of trade for a buyer with type $t$ misreporting $t'$ is given by:
\begin{gather}
	U_b(t';t) = \int_{q \in Q} \pi(t',q)[v(t,q) -P_b(t') ]g(q) \,\mathrm{d}q.
\end{gather}
Note that for a buyer of type $t$, $\pi(t', q)$ may not be obedient anymore, i.e., after misreporting their type, the buyer may choose not to follow the trade recommendation if doing so leads to a negative utility. So the expected utility for the buyer is $\max \{U_b(t';t), 0\}$. Thus the incentive compatibility constraints for the buyer become the following:
\begin{gather}
	U_b(t) \ge \text{max} \{U_b(t';t), 0\}.
\end{gather}
Interestingly, $U_b(t) \ge 0$ is already implied by the IR constraint (Constraint \eqref{eq:IR buyer}), so the only useful constraint is 
\begin{gather}
\label{eq:IC buyer}
	U_b(t) \ge  U_b(t';t).
\end{gather}

For the seller, upon receiving signal 1, the expected utility of trade for a seller with type $q$ misreporting $q'$ is 
\begin{gather}
	U_s(q';q) = \int_{t \in T} \pi(t, q')[P_s(q') - r(q)]f(t) \,\mathrm{d} t + r(q),
\end{gather}
where $\pi(t, q')$ may also not be obedient for the seller. With arguments similar to that for the buyer, we obtain that the IC constraint for the seller is the following:
\begin{gather}
\label{eq:IC seller}
    U_s(q) \ge U_s(q';q).
\end{gather}
The seller's IC constraint can also be equivalent to misreporting does not result in a higher seller surplus, that is:
\begin{align}
	SU(q) \ge SU(q';q),
\end{align}
where $SU(q';q) = \int_{t\in T} \pi(t, q') [P_s(q') - r(q)]f(t) \,\mathrm{d} t$. 

Combining all the constraints together, we can formulate the mechanism design problem as the following mathematical program:
\begin{maxi}
{}   
{\int_{q\in Q} \int_{t\in T} \pi(t, q)[P_b(t) - P_s(q)]f(t)g(q) \,\mathrm{d}t \mathrm{d}q}  
{\label{eq:LP}}  
{}  
\addConstraint{U_b(t)}{\ge 0, \quad}{\forall t \in T}
\addConstraint{SU(q)}{\ge 0, \quad}{\forall q \in Q}
\addConstraint{U_b(t)}{\ge U_b(t';t), \quad}{\forall t, t' \in T}
\addConstraint{SU(q)}{\ge SU(q';q), \quad}{\forall q, q' \in Q}
\end{maxi}

\section{The Optimal Mechanism}
\label{sec:optimal}
In this section, we apply a variant of the Myersonian approach and give an optimal solution in closed form to the optimization problem \eqref{eq:LP}. 

Before we present our main result, we need the following definition.
\begin{definition}[Virtual Value Function and Virtual Cost Function]
For any random variable $w$ with PDF $x(w)$ and CDF $X(w)$, the virtual value function $\phi^-(w)$ and the virtual cost function $\phi^+(w)$ are defined as:
\begin{gather*}
    \phi^-(w)=w-\frac{1-X(w)}{x(w)}, \quad \phi^+(w)=w+\frac{X(w)}{x(w)}.
\end{gather*}
\end{definition}
The definition of the virtual value function is the same as in \cite{myerson1981optimal}, and the virtual cost function is also standard in the literature (see, e.g., \cite{manelli1995optimal,myerson1983efficient}).

Let $\phi^-_b(t)$ be the buyer's virtual value function and $\phi^+_s(q)$ the seller's virtual cost function. Our main result is built upon the following regularity condition:
\begin{definition}[Regularity]
	A problem instance is regular if both functions $\phi^-_b(t)$ and $\frac{k \phi^+_s(q) - \alpha_2(q)}{\alpha_1(q)}$ are monotone non-decreasing.
\end{definition}

This regularity condition is also standard in the literature \cite{myerson1981optimal,cai2011extreme,krishna2001convex} and is satisfied for a wide range of distributions. Our solution belongs to the following category of \textit{threshold mechanisms} when the problem instance is \emph{regular}.

\begin{definition}
A mechanism $(\pi, P_b, P_s)$ is called a threshold mechanism if there exist monotone functions $\lambda(t)$ and $\eta(q)$, such that the mediator recommends ``trade'' as long as $\lambda(t)\ge \eta(q)$, i.e.,
\begin{gather*}
    \pi(t, q) = \begin{cases}
    1 & \text{if } \ \lambda(t) \ge  \eta(q) \\
    0 & \text{otherwise}
    \end{cases}.
\end{gather*}
In this case, $\pi(t, q)$ is fully described by function $\lambda(t)$ and $\eta(q)$.
\end{definition}

Now we are ready to present our optimal mechanism.
\begin{theorem}
Suppose that a problem instance satisfies the regularity condition. Then the threshold mechanism with the following threshold functions and payment functions is an optimal mechanism:
\begin{gather*}
    \lambda(t) = \phi^-_b(t)\quad \text{and}\quad \eta(q) = \frac{k \phi^+_s(q) - \alpha_2(q)}{\alpha_1(q)},
\end{gather*}
\begin{align}
    P_b^*(t) = &\frac{1}{\int_{q\in Q} \pi^*(t, q)g(q)\,\mathrm{d}q} \left[ \int_{q\in Q}\pi^*(t, q) v(t,q) g(q) \,\mathrm{d}q \right.\nonumber \\
    &\left. - \int_{t_1}^t \int_{q\in Q}\alpha_1(q)\pi^*(x, q)g(q) \,\mathrm{d}q \mathrm{d}x \right],\label{eq:opt_buyer_payment}\\
    P_s^*(q) = &\frac{1}{\int_{t\in T}\pi^*(t, q)f(t) \,\mathrm{d}t} \left[\int_{t\in T}\pi^*(t, q) r(q) f(t) \,\mathrm{d}t \right. \nonumber\\
    &\left. + k\int_{q}^{q_2} \int_{t\in T}\pi^*(t, x)f(t) \,\mathrm{d}t \mathrm{d}x  \right]\label{eq:opt_seller_payment}.
\end{align}
\label{thm:optimal_mechanism}
\end{theorem}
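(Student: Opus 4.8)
The strategy is the standard Myersonian two-step: first reduce the optimization program \eqref{eq:LP} to an unconstrained (pointwise) maximization over the allocation rule $\pi$ by expressing expected revenue in terms of virtual values/costs, then verify that the proposed threshold $\pi^*$ and payments $P_b^*, P_s^*$ satisfy all the constraints and attain that pointwise bound.

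First I would characterize implementability. Fixing $\pi$, define the buyer's interim trade probability $\bar\pi_b(t) = \int_Q \pi(t,q)g(q)\,\dd q$ and note that $U_b(t;t') $ is the buyer's utility from reporting $t'$; because $v(t,q)=\alpha_1(q)t+\alpha_2(q)$ is linear in $t$, the envelope/monotonicity argument à la Myerson applies to the ``effective type'' and shows that \eqref{eq:IC buyer} holds iff $\int_Q \alpha_1(q)\pi(t,q)g(q)\,\dd q$ is non-decreasing in $t$ and the buyer payment is pinned down (up to the utility of the lowest type) by the integral formula appearing in \eqref{eq:opt_buyer_payment}. Combined with the buyer IR constraint $U_b(t)\ge 0$, the least-favorable type's utility is driven to $0$, which is exactly what \eqref{eq:opt_buyer_payment} encodes. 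Symmetrically, using the seller surplus $SU(q)$ and the change of variables $r(q)=kq$, the seller's IC \eqref{eq:IC seller} is equivalent to monotonicity of $\int_T \pi(t,q)f(t)\,\dd t$ in the appropriate direction together with the payment formula \eqref{eq:opt_seller_payment}, and seller IR forces $SU(q_2)=0$ (the highest-quality seller is the binding one — this is the source of the ``counterintuitive'' remark in the introduction).

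Next I would substitute these payment identities into the objective $\int_Q\int_T \pi(t,q)[P_b(t)-P_s(q)]f(t)g(q)\,\dd t\,\dd q$ and integrate by parts (Fubini plus the standard swap of the order of integration for the nested integrals) to rewrite expected revenue as
\begin{gather*}
\int_Q\int_T \pi(t,q)\Big[\phi_b^-(t)\,\alpha_1(q)\cdot\tfrac{1}{\alpha_1(q)}\big(\text{buyer term}\big) - \big(k\phi_s^+(q)-\text{shift}\big)\Big] f(t)g(q)\,\dd t\,\dd q,
\end{gather*}
more precisely into $\int_Q\int_T \pi(t,q)\,\alpha_1(q)\big(\phi_b^-(t) - \eta(q)\big) f(t)g(q)\,\dd t\,\dd q$ after collecting the $\alpha_1(q)t+\alpha_2(q)$ and $kq$ terms and recognizing the virtual value $\phi_b^-(t)=t-\tfrac{1-F(t)}{f(t)}$ and virtual cost $\phi_s^+(q)=q+\tfrac{G(q)}{g(q)}$. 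Since $\alpha_1(q)>0$, the pointwise-optimal choice is to set $\pi(t,q)=1$ exactly when $\phi_b^-(t)\ge \eta(q)$, i.e. $\lambda(t)\ge\eta(q)$ — which is the claimed $\pi^*$. Finally I would close the argument by checking feasibility of $\pi^*$: under the regularity condition $\lambda$ and $\eta$ are monotone non-decreasing, so $\bar\pi_b$ and the seller's interim quantity are monotone in the required directions, hence the monotonicity (second-order) IC conditions hold; the payment formulas are then exactly the IC-pinned transfers with the binding IR types getting zero surplus, so all four constraints of \eqref{eq:LP} are met and the pointwise upper bound is achieved with equality.

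The main obstacle I anticipate is the implementability characterization for the \emph{seller} and the careful handling of the non-forcing (obedience) structure. Unlike a textbook screening problem, after a misreport an agent may decline to follow the ``trade'' recommendation, so the deviation payoff is $\max\{U_b(t';t),0\}$ (resp. for the seller); the excerpt already argues this extra option is slack given IR, but in the formal proof one must make sure the envelope argument is run on the truthful-and-obedient path and that the $\max$ never actually binds at the optimum. A second delicate point is that the seller's relevant monotone statistic is $\int_T\pi(t,q)f(t)\,\dd t$ and the direction of monotonicity is reversed (decreasing in $q$), so one must be careful with signs when integrating by parts and when translating ``$\eta(q)$ non-decreasing'' into the correct second-order condition; getting the constant of integration right (seller IR binds at $q_2$, buyer IR binds at $t_1$) is what makes \eqref{eq:opt_buyer_payment}–\eqref{eq:opt_seller_payment} come out precisely as stated.
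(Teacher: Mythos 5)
Your proposal is correct and follows essentially the same route as the paper: characterize feasible mechanisms via the Myersonian envelope/monotonicity conditions with IR binding at $t_1$ and $q_2$ (the paper's Lemma \ref{lem:property}), rewrite revenue in virtual-surplus form $\int\int \pi(t,q)\alpha_1(q)[\phi_b^-(t)-\eta(q)]f(t)g(q)\,\dd t\,\dd q - U_b(t_1)-SU(q_2)$ (Lemma \ref{lem:revenue}), and then show the threshold rule maximizes the integrand pointwise under regularity while the stated payments make the boundary terms vanish (Lemmas \ref{lem:feasible} and \ref{lem: payment}). Your flagged subtleties (obedience reducing to IR, the seller's reversed monotonicity and binding type at $q_2$) are exactly the points the paper handles, so no gap remains.
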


\begin{example}\label{example:1}
Consider an example where both the buyer's type and the seller's quality are uniformly distributed in the interval $[1,2]$, i.e., $T=Q=[1,2]$, and $f(t)=g(q)=1, \forall t, q$. We set $\alpha_1(q)=q$ and $\alpha_2(q)=0$. Let $r(q) = 1.5q$, i.e., $k=1.5$.

In this example, we have $\phi^-_b(t) = t - \frac{1-F(t)}{f(t)} = 2t-2, \phi^+_s(q) = q+\frac{G(q)}{g(q)} = 2q-1$. Therefore $\lambda(t) = \phi^-_b(t) = 2t-2$ is non-decreasing in $t$ and $\eta(q)= k\frac{\phi^+_s(q)}{q} = 3 - \frac{1.5}{q}$ is also non-decreasing in $q$, satisfying the regularity condition. In the optimal mechanism, the mediator will recommend them to trade if $\lambda(t)\ge \eta(q)$, or equivalently, when $t\ge 2.5 - \frac{1.5}{2q}$. For the buyer, there are two cases.
\begin{itemize}
    \item When $t < 1.75$, we have $t < 2.5-\frac{1.5}{2q},\forall q$. This means that the mechanism will never recommend trade, or equivalently, $\pi^*(t, q) = 0, \forall q$. The payment is $P_b^*(t) = 0$. For these buyers, there is no trade the mediator charges nothing.
    \item When $t \ge 1.75$, the mechanism will recommend trade when $q \le \frac{1.5}{5-2t}$. 
    In this case, the buyer's payment function becomes:
    \begin{gather*}
        P_b ^*(t) =  0.5+\frac{0.9375}{2.5-t}.
    \end{gather*}
\end{itemize}

For the seller, there are also two cases.
\begin{itemize}
    \item When $q > 1.5$, then $2.5 - \frac{1.5}{2q} > 2$. This means the mediator will never recommend trade since $t$ is at most 2. So we have $\pi^*(t, q) = 0$ and $P_s^*(q) =0$  for all $t$ in this case.
    \item When $q \le 1.5$, we have that $\pi^*(t, q) = 1$ when $t \ge 2.5 - \frac{1.5}{2q}$. If $q < 1.5$, the payment function for the seller $P_s^*(q)$ becomes:
    \begin{gather*}
        P_s^*(q) =\frac{9q}{6-4q} \ln\left(\frac{1.5}{q}\right).
    \end{gather*}
    And if $q=1.5$, we can take the limit and get $P_s^*(q)=2.25$.
\end{itemize}
\end{example}

The rest of the section is devoted to the proof of Theorem \ref{thm:optimal_mechanism}. We first introduce the following quantities:
\begin{gather}
    R^{\pi}_b(t) = \int_{q\in Q}\alpha_1(q) \pi(t, q) g(q)  \,\mathrm{d}q \label{eq:weighted buyer},\\
    R^{\pi}_s(q) = \int_{t \in T} \pi(t, q) f(t) \,\mathrm{d}t. \label{eq:weighted seller}
\end{gather}

Note that $R^{\pi}_b(t)$ and $R^{\pi}_s(q)$ can be viewed as the expected probability of being recommended to trade, except that $R^{\pi}_b(t)$ is weighted by $\alpha_1(q)$.

We call a mechanism $(\pi, P_b, P_s)$ \emph{feasible} if it satisfies the constraints in Program \eqref{eq:LP}. The following lemma gives a characterization of feasible mechanisms.
\begin{lemma}
\label{lem:property}
A mechanism $(\pi, P_b, P_s)$ is feasible if and only if it satisfies the following constraints:
\begin{align}
    & R^{\pi}_b(t) \text{ is monotone non-decreasing in $t$}. \label{eq:buyer_monotone}\\
    & R^{\pi}_s(q) \text{ is monotone non-increasing in $q$}. \label{eq:seller_monotone}\\
    & U_b(t) = U_b(t_1) + \int_{t_1}^t R^{\pi}_b(x) \,\mathrm{d}x \label{eq:IC property buyer} \\
    & SU(q) = SU(q_1) - k\int_{q_1}^q R^{\pi}_s(x) \,\mathrm{d}x \label{eq:IC property seller}\\
    & U_b(t_1) \ge 0 \label{eq:IR property buyer}\\
    & SU(q_2) \ge 0 \label{eq:IR property seller}
\end{align}
\end{lemma}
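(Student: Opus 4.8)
The plan is to prove Lemma~\ref{lem:property} by specializing the classical Myersonian envelope argument to the two sides of the market separately: the buyer side has a ``value'' flavor and the seller side a ``cost'' (procurement) flavor. The bookkeeping subtlety is that, since $v(t,q)=\alpha_1(q)t+\alpha_2(q)$ is linear in $t$ with a $q$-dependent slope, the quantity that must be monotone on the buyer side is the $\alpha_1$-weighted recommendation probability $R^\pi_b$ and not the unweighted one; and since the seller's outside value $r(q)=kq$ enters with the multiplier $k$, the seller side yields a \emph{non-increasing} monotonicity condition on $R^\pi_s$. Concretely, I would first substitute $v(t,q)=\alpha_1(q)t+\alpha_2(q)$ into $U_b(t';t)$ to get $U_b(t';t)=t\,R^\pi_b(t')+W_b(t')$ with $W_b(t'):=\int_{q\in Q}\pi(t',q)[\alpha_2(q)-P_b(t')]g(q)\,\mathrm{d}q$ independent of $t$; since $U_b(t)=U_b(t;t)$, this rearranges to $U_b(t';t)=U_b(t')+(t-t')R^\pi_b(t')$. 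Symmetrically, substituting $r(q)=kq$ into $SU(q';q)$ gives $SU(q';q)=SU(q')-k(q-q')R^\pi_s(q')$. So the buyer's IC constraints say precisely that $U_b$ is the upper envelope of the affine-in-$t$ family $\{t\mapsto tR^\pi_b(t')+W_b(t')\}_{t'}$, and the seller's IC constraints say that $SU$ is the upper envelope of the affine-in-$q$ family with slopes $-kR^\pi_s(q')$.

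\emph{Necessity.} Adding the two IC inequalities $U_b(t)\ge U_b(t';t)$ and $U_b(t')\ge U_b(t;t')$ and using the reformulation yields, for $t>t'$, the sandwich $R^\pi_b(t')\le \tfrac{U_b(t)-U_b(t')}{t-t'}\le R^\pi_b(t)$. This gives \eqref{eq:buyer_monotone} at once; and since $R^\pi_b$ is bounded and monotone, hence Riemann integrable, summing this inequality over a partition of $[t_1,t]$ and squeezing between the lower and upper Riemann sums produces \eqref{eq:IC property buyer}. Because $\alpha_1>0$ we have $R^\pi_b\ge0$, so $U_b$ is non-decreasing and ``$U_b(t)\ge0$ for all $t$'' collapses to \eqref{eq:IR property buyer}. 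The seller side is the same computation with a sign flip: the analogous sandwich gives $-kR^\pi_s(q')\le\tfrac{SU(q)-SU(q')}{q-q'}\le -kR^\pi_s(q)$ for $q>q'$, hence \eqref{eq:seller_monotone} and then \eqref{eq:IC property seller}; and $R^\pi_s\ge0$, $k\ge0$ make $SU$ non-increasing, so ``$SU(q)\ge0$ for all $q$'' collapses to \eqref{eq:IR property seller}. (If $k=0$ the seller part is trivial: $r\equiv0$, $SU(q';q)$ does not depend on $q$, and those constraints degenerate.)

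\emph{Sufficiency.} Conversely, assume \eqref{eq:buyer_monotone}--\eqref{eq:IR property seller}. Combining $U_b(t';t)=U_b(t')+(t-t')R^\pi_b(t')$ with \eqref{eq:IC property buyer} gives $U_b(t)-U_b(t';t)=\int_{t'}^{t}\bigl(R^\pi_b(x)-R^\pi_b(t')\bigr)\,\mathrm{d}x$, which is $\ge0$ for all $t,t'$ by monotonicity of $R^\pi_b$ (handling $t>t'$ and $t<t'$ separately); this is the buyer's IC, and $R^\pi_b\ge0$ together with \eqref{eq:IR property buyer} gives $U_b(t)\ge0$ everywhere. The identical pair of computations, now with $SU(q';q)=SU(q')-k(q-q')R^\pi_s(q')$ and \eqref{eq:IC property seller}, yields $SU(q)-SU(q';q)=-k\int_{q'}^{q}\bigl(R^\pi_s(x)-R^\pi_s(q')\bigr)\,\mathrm{d}x\ge0$ and, using $R^\pi_s\ge0$, $k\ge0$ and \eqref{eq:IR property seller}, $SU(q)\ge0$ everywhere. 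Hence the mechanism satisfies every constraint of Program~\eqref{eq:LP}, i.e.\ it is feasible.

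The hard part is not conceptual but is the regularity step inside the necessity direction: upgrading the difference-quotient sandwich to the exact integral representations \eqref{eq:IC property buyer} and \eqref{eq:IC property seller}. I would avoid invoking almost-everywhere differentiability of the envelope functions $U_b$ and $SU$, and instead argue directly from the sandwich together with the Riemann integrability of bounded monotone functions, which keeps the argument elementary and self-contained.
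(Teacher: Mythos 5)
Your proof is correct and follows essentially the same route as the paper: the same affine reformulation $U_b(t';t)=U_b(t')+(t-t')R^{\pi}_b(t')$ (and its seller analogue), the same two-sided IC sandwich yielding the monotonicity conditions, the same observation that $R^{\pi}_b\ge 0$ collapses the IR constraints to the endpoints, and the same sufficiency computation. The one place you genuinely diverge is the step from the sandwich to the integral representations \eqref{eq:IC property buyer} and \eqref{eq:IC property seller}: the paper divides by $t-t'$, lets $t\to t'$, asserts $\mathrm{d}U_b(t)/\mathrm{d}t=R^{\pi}_b(t)$, and then (implicitly) integrates; that pointwise identity is only guaranteed at continuity points of the monotone function $R^{\pi}_b$, and recovering \eqref{eq:IC property buyer} from it tacitly uses absolute continuity of $U_b$ (which does follow from the sandwich but is not spelled out). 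Your telescoping of the increments of $U_b$ over a partition and squeezing between the lower and upper Riemann sums of the bounded monotone integrand delivers the exact formula directly, so your treatment of this regularity step is more self-contained and, if anything, tighter than the paper's own. Two minor points: you write out the seller side explicitly, which the paper omits ``by symmetry''; and your parenthetical about $k=0$ is apt --- in that degenerate case the seller's IC gives no information about $R^{\pi}_s$, so \eqref{eq:seller_monotone} is not actually forced by feasibility, a caveat the paper's statement and proof inherit silently as well.
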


Before we derive the optimal mechanism, we first need to rewrite the revenue of the mediator.
\begin{lemma}
\label{lem:revenue}
	The mediator's expected revenue of any feasible mechanism $(\pi, P_b, P_s)$ can be written as:
\begin{align}
    Rev(\pi,P_b,P_s)= & \int_{q_1}^{q_2} \int_{t_1}^{t_2}\pi(t, q)[\alpha_1(q) \phi^-_b(t) +\alpha_2(q)- k \phi^+_s(q)] \nonumber\\
    & \cdot f(t)g(q)\,\mathrm{d}t\mathrm{d}q-U_b(t_1) - SU(q_2) .\label{eq:prove revenue}
\end{align}
\end{lemma}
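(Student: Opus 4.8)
\textbf{Proof proposal for Lemma \ref{lem:revenue}.}
The plan is to carry out the standard Myersonian ``integration by parts'' argument, but keeping careful track of the asymmetry between the buyer (whose relevant boundary type is $t_1$) and the seller (whose relevant boundary type is $q_2$). I would start from the revenue expression
\begin{gather*}
Rev(\pi,P_b,P_s)=\int_{q\in Q}\int_{t\in T}\pi(t,q)P_b(t)f(t)g(q)\,\mathrm{d}t\mathrm{d}q-\int_{q\in Q}\int_{t\in T}\pi(t,q)P_s(q)f(t)g(q)\,\mathrm{d}t\mathrm{d}q,
\end{gather*}
and eliminate the payment functions using the defining identities of $U_b$ and $SU$. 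Concretely, from \eqref{eq:utility buyer} we have $\int_{q\in Q}\pi(t,q)P_b(t)g(q)\,\mathrm{d}q=\int_{q\in Q}\pi(t,q)v(t,q)g(q)\,\mathrm{d}q-U_b(t)$, and from the definition of $SU$ we have $\int_{t\in T}\pi(t,q)P_s(q)f(t)\,\mathrm{d}t=SU(q)+\int_{t\in T}\pi(t,q)r(q)f(t)\,\mathrm{d}t$. Substituting these in (integrating over $q$ first in the buyer term and over $t$ first in the seller term) and using $v(t,q)-r(q)=\alpha_1(q)t+\alpha_2(q)-kq$ collapses the revenue to
\begin{gather*}
Rev=\int_{q\in Q}\int_{t\in T}\pi(t,q)[\alpha_1(q)t+\alpha_2(q)-kq]f(t)g(q)\,\mathrm{d}t\mathrm{d}q-\int_{t\in T}U_b(t)f(t)\,\mathrm{d}t-\int_{q\in Q}SU(q)g(q)\,\mathrm{d}q.
\end{gather*}

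Next I would bring in the envelope characterizations from Lemma \ref{lem:property}. For the buyer, plug $U_b(t)=U_b(t_1)+\int_{t_1}^t R^{\pi}_b(x)\,\mathrm{d}x$ into $\int_T U_b(t)f(t)\,\mathrm{d}t$ and swap the order of integration over the region $t_1\le x\le t\le t_2$ (Fubini, justified since all integrands are nonnegative and bounded), yielding $U_b(t_1)+\int_{t_1}^{t_2}(1-F(x))R^{\pi}_b(x)\,\mathrm{d}x$. Expanding $R^{\pi}_b(x)=\int_Q\alpha_1(q)\pi(x,q)g(q)\,\mathrm{d}q$ and writing $1-F(x)=\frac{1-F(x)}{f(x)}f(x)$ puts this in the form $U_b(t_1)+\int_Q\int_T\alpha_1(q)\pi(t,q)\frac{1-F(t)}{f(t)}f(t)g(q)\,\mathrm{d}t\mathrm{d}q$. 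For the seller I first re-anchor the envelope identity at $q_2$: from $SU(q)=SU(q_1)-k\int_{q_1}^q R^{\pi}_s(x)\,\mathrm{d}x$ evaluated at $q_2$ we get $SU(q)=SU(q_2)+k\int_q^{q_2}R^{\pi}_s(x)\,\mathrm{d}x$. Plugging this into $\int_Q SU(q)g(q)\,\mathrm{d}q$ and swapping the order over $q_1\le q\le x\le q_2$ gives $SU(q_2)+k\int_{q_1}^{q_2}G(x)R^{\pi}_s(x)\,\mathrm{d}x$, which expands to $SU(q_2)+k\int_Q\int_T\pi(t,q)\frac{G(q)}{g(q)}f(t)g(q)\,\mathrm{d}t\mathrm{d}q$.

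Finally I would assemble the three pieces. The single remaining double integral has integrand $\alpha_1(q)t+\alpha_2(q)-kq-\alpha_1(q)\frac{1-F(t)}{f(t)}-k\frac{G(q)}{g(q)}$, which regroups exactly as $\alpha_1(q)\bigl(t-\frac{1-F(t)}{f(t)}\bigr)+\alpha_2(q)-k\bigl(q+\frac{G(q)}{g(q)}\bigr)=\alpha_1(q)\phi^-_b(t)+\alpha_2(q)-k\phi^+_s(q)$ by the definitions of the virtual value and virtual cost functions, and the leftover boundary terms are precisely $-U_b(t_1)-SU(q_2)$, giving \eqref{eq:prove revenue}. I do not expect a genuine obstacle here: the argument is a routine double application of the envelope formula plus Fubini, and the only thing that requires care is the bookkeeping of signs and of which boundary type survives — in particular remembering to re-anchor the seller's envelope identity at $q_2$ (where monotonicity of $R^\pi_s$ points and where the seller's IR binds) rather than at $q_1$, so that the final expression features $SU(q_2)$ and matches the form later used to argue optimality.
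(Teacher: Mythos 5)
Your proposal is correct and follows essentially the same route as the paper's proof: eliminate the payments via the definitions of $U_b$ and $SU$, apply the envelope identities from Lemma \ref{lem:property} (with the seller's identity re-anchored at $q_2$, exactly as the paper does implicitly), swap integration orders to produce the $\frac{1-F(t)}{f(t)}$ and $\frac{G(q)}{g(q)}$ terms, and regroup into the virtual value/cost form with the boundary terms $-U_b(t_1)-SU(q_2)$. No gaps.
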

 
The following result shows that the mechanism described in Theorem \ref{thm:optimal_mechanism} is feasible. 
\begin{lemma}
For a regular problem instance, the mechanism $(\pi^*, P_b^*, P_s^*)$ defined in Theorem \ref{thm:optimal_mechanism} is feasible.
\label{lem:feasible}
\end{lemma}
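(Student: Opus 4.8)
The plan is to appeal to Lemma~\ref{lem:property}: a mechanism is feasible if and only if it satisfies the six conditions \eqref{eq:buyer_monotone}--\eqref{eq:IR property seller}, so it suffices to verify each of these for $(\pi^*, P_b^*, P_s^*)$. The monotonicity conditions \eqref{eq:buyer_monotone} and \eqref{eq:seller_monotone} come for free from the threshold structure together with the regularity hypothesis. Writing $\pi^*(t,q) = \mathbf{1}[\lambda(t) \ge \eta(q)]$ with $\lambda = \phi^-_b$ and $\eta = (k\phi^+_s - \alpha_2)/\alpha_1$ both non-decreasing, for $t \le t'$ every $q$ with $\pi^*(t,q) = 1$ also has $\pi^*(t',q) = 1$, so $\pi^*(t,\cdot) \le \pi^*(t',\cdot)$ pointwise; integrating against the non-negative weight $\alpha_1(q) g(q)$ in \eqref{eq:weighted buyer} shows $R^{\pi^*}_b$ is non-decreasing. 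Symmetrically, for $q \le q'$ the larger value $\eta(q')$ only shrinks the set $\{t : \lambda(t) \ge \eta(q)\}$, so $\pi^*(\cdot, q') \le \pi^*(\cdot, q)$ pointwise and $R^{\pi^*}_s$ in \eqref{eq:weighted seller} is non-increasing.

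It remains to verify the two integral (envelope) identities and the two boundary inequalities, and here the payment rules in Theorem~\ref{thm:optimal_mechanism} are reverse-engineered precisely to make them hold. Multiplying the definition \eqref{eq:opt_buyer_payment} of $P_b^*(t)$ by $\int_{q\in Q}\pi^*(t,q)g(q)\,\mathrm{d}q$ and rearranging gives
\[
\int_{q\in Q}\pi^*(t,q)\bigl[v(t,q) - P_b^*(t)\bigr]g(q)\,\mathrm{d}q = \int_{t_1}^t R^{\pi^*}_b(x)\,\mathrm{d}x,
\]
whose left-hand side is exactly $U_b(t)$ by \eqref{eq:utility buyer}; this is \eqref{eq:IC property buyer} with $U_b(t_1) = 0$, which simultaneously gives \eqref{eq:IR property buyer}. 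The same manipulation applied to \eqref{eq:opt_seller_payment} yields $SU(q) = k\int_q^{q_2} R^{\pi^*}_s(x)\,\mathrm{d}x$, which is \eqref{eq:IC property seller} once we identify $SU(q_1) = k\int_{q_1}^{q_2} R^{\pi^*}_s(x)\,\mathrm{d}x$, and evaluating at $q = q_2$ gives $SU(q_2) = 0$, hence \eqref{eq:IR property seller}.

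The one place that needs a moment of care is the degenerate case where a normalization denominator vanishes --- a buyer type $t$ with $\int_{q\in Q} \pi^*(t,q)g(q)\,\mathrm{d}q = 0$, or a seller quality $q$ with $\int_{t\in T} \pi^*(t,q)f(t)\,\mathrm{d}t = 0$ --- in which case the corresponding payment is taken to be $0$ by convention. In the buyer case, $\pi^*(t,\cdot) = 0$ almost everywhere forces $U_b(t) = 0$, and since $R^{\pi^*}_b$ is non-decreasing with $R^{\pi^*}_b(t) = 0$ it vanishes on all of $[t_1, t]$, so $\int_{t_1}^t R^{\pi^*}_b = 0$ and \eqref{eq:IC property buyer} still holds; the seller case is analogous, using that $R^{\pi^*}_s$ is non-increasing. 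I expect this bookkeeping around the vanishing denominators to be the only real obstacle --- once it is dispatched, feasibility follows immediately from Lemma~\ref{lem:property} and the threshold structure.
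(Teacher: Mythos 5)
Your proof is correct and takes essentially the same route as the paper: verify the conditions of Lemma~\ref{lem:property}, using regularity and the threshold structure for the monotonicity of $R^{\pi^*}_b$ and $R^{\pi^*}_s$, and rearranging the payment formulas \eqref{eq:opt_buyer_payment}--\eqref{eq:opt_seller_payment} to obtain $U_b(t)=\int_{t_1}^t R^{\pi^*}_b(x)\,\mathrm{d}x$ and $SU(q)=k\int_q^{q_2}R^{\pi^*}_s(x)\,\mathrm{d}x$, hence $U_b(t_1)=SU(q_2)=0$. Your explicit handling of types with vanishing normalization denominators is a minor bit of extra care that the paper leaves implicit, but it does not alter the argument.
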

We only provide a proof sketch of Theorem \ref{thm:optimal_mechanism} due to space limit.
\begin{proof}[Proof sketch of Theorem \ref{thm:optimal_mechanism}]
    According to Lemma \ref{lem:feasible}, our mechanism is feasible. The definition of $\pi^*$ implies that our mechanism maximizes the first term in Equation \eqref{eq:prove revenue}. And using the definition of $P^*_b(t)$ and $P^*_s(q)$, we can show that $U_b(t_1)=0$ and $SU(q_2)=0$. This means that our mechanism optimizes the 3 terms in Equation \eqref{eq:prove revenue} simultaneously, and hence optimal. 
\end{proof}

We also find that the optimal mechanism must satisfy that $U_b(t_1) = U_s(q_2) = 0$.
\begin{lemma}
\label{lem: payment}
    A feasible mechanism $(\pi, P_b, P_s)$ is a revenue-maximizing mechanism, then it must satisfy $U_b(t_1) = 0, U_s(q_2) = 0$.
\end{lemma}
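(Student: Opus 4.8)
The plan is to argue by contradiction, exploiting the clean decomposition of revenue in Lemma~\ref{lem:revenue}. Write the revenue of any feasible mechanism $(\pi,P_b,P_s)$ as $Rev(\pi,P_b,P_s)=J(\pi)-U_b(t_1)-SU(q_2)$, where $J(\pi)$ denotes the virtual-surplus integral in \eqref{eq:prove revenue}, which depends only on the allocation rule $\pi$. By the IR part of Lemma~\ref{lem:property} (constraints \eqref{eq:IR property buyer} and \eqref{eq:IR property seller}), every feasible mechanism satisfies $U_b(t_1)\ge 0$ and $SU(q_2)\ge 0$, hence $Rev(\pi,P_b,P_s)\le J(\pi)$, with equality iff $U_b(t_1)=SU(q_2)=0$.

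Now suppose $(\pi,P_b,P_s)$ is feasible and revenue-maximizing, and assume toward a contradiction that $U_b(t_1)+SU(q_2)>0$. Keeping $\pi$ fixed, I would replace the payments by the ``canonical'' ones pinned down by the envelope identities \eqref{eq:IC property buyer}--\eqref{eq:IC property seller} together with the boundary choices $\widetilde U_b(t_1)=0$ and $\widetilde{SU}(q_2)=0$: set $\widetilde U_b(t)=\int_{t_1}^t R^{\pi}_b(x)\,\mathrm{d}x$ and recover $\widetilde P_b(t)$ from \eqref{eq:utility buyer} on the set where $\int_{q\in Q}\pi(t,q)g(q)\,\mathrm{d}q>0$ (on its complement $\pi(t,\cdot)$ vanishes a.e., so $P_b(t)$ is irrelevant and $R^{\pi}_b(t)=0$), and symmetrically define $\widetilde P_s$; these are exactly formulas \eqref{eq:opt_buyer_payment}--\eqref{eq:opt_seller_payment} with $\pi^*$ replaced by $\pi$. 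The mechanism $(\pi,\widetilde P_b,\widetilde P_s)$ is still feasible by Lemma~\ref{lem:property}: the monotonicity conditions \eqref{eq:buyer_monotone}--\eqref{eq:seller_monotone} involve only $\pi$ and are unchanged; the envelope conditions hold by construction; and $\widetilde U_b(t_1)=0\ge 0$, $\widetilde{SU}(q_2)=0\ge 0$ hold trivially. Applying Lemma~\ref{lem:revenue} to this mechanism yields $Rev(\pi,\widetilde P_b,\widetilde P_s)=J(\pi)>J(\pi)-U_b(t_1)-SU(q_2)=Rev(\pi,P_b,P_s)$, contradicting revenue-maximality. Hence $U_b(t_1)+SU(q_2)=0$, and since both summands are nonnegative, $U_b(t_1)=0$ and $SU(q_2)=0$ (equivalently, the seller's boundary utility is $U_s(q_2)=r(q_2)$, which is the quantity the statement refers to, vanishing when $k=0$).

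An alternative and even shorter route is available when the instance is regular: by Theorem~\ref{thm:optimal_mechanism} together with Lemma~\ref{lem:feasible}, the optimal revenue equals $\max_{\pi}J(\pi)=J(\pi^*)$, attained by a mechanism with $U_b(t_1)=SU(q_2)=0$. Any revenue-maximizing $(\pi,P_b,P_s)$ then obeys $J(\pi^*)=Rev(\pi,P_b,P_s)=J(\pi)-U_b(t_1)-SU(q_2)\le J(\pi^*)-U_b(t_1)-SU(q_2)$, which again forces $U_b(t_1)=SU(q_2)=0$.

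The only genuinely delicate point is the construction of $\widetilde P_b,\widetilde P_s$ and the verification that it preserves feasibility — in particular, handling the types at which the recommendation probability vanishes, where the payment is unconstrained but the envelope integral still returns the correct constant utility because $R^{\pi}_b$ (resp.\ $R^{\pi}_s$) is zero there, so the monotonicity and envelope conditions of Lemma~\ref{lem:property} are unaffected. Everything else is a direct substitution into the revenue identity of Lemma~\ref{lem:revenue} and an appeal to the sign constraints of Lemma~\ref{lem:property}.
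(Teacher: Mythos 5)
Your proof is correct and rests on the same two pillars as the paper's own argument: the revenue decomposition of Lemma~\ref{lem:revenue} and the sign constraints $U_b(t_1)\ge 0$, $SU(q_2)\ge 0$ from Lemma~\ref{lem:property}, combined in a contradiction that keeps $\pi$ fixed and alters only the payments. The difference lies in the construction: the paper perturbs $P_b$ only at the single type $t_1$ (adding a constant $c$ there so that $U_b(t_1)=0$) and treats only the buyer side, whereas you replace both payment functions globally by the canonical envelope payments \eqref{eq:opt_buyer_payment}--\eqref{eq:opt_seller_payment} evaluated at $\pi$, killing both boundary terms at once. Your version is in fact the more robust one: modifying $P_b$ on the measure-zero set $\{t_1\}$ leaves the revenue integral unchanged and breaks the envelope identity \eqref{eq:IC property buyer} (and the IC constraint for type $t_1$), while your global replacement is certified feasible by the sufficiency direction of Lemma~\ref{lem:property} (monotonicity depends only on $\pi$, the envelope identities hold by construction, the boundary values are zero), after which Lemma~\ref{lem:revenue} gives $Rev=J(\pi)>J(\pi)-U_b(t_1)-SU(q_2)$, the desired contradiction. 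Two minor remarks: at types where the recommendation probability vanishes, the operative fact is the monotonicity of $R^{\pi}_b$ (resp.\ $R^{\pi}_s$), which forces the envelope integral up to that type to vanish, not merely the vanishing of $R^{\pi}_b$ at the point itself --- you gesture at this but should say it explicitly; and you are right that the quantity forced to zero is the seller surplus $SU(q_2)$ rather than $U_s(q_2)=SU(q_2)+r(q_2)$, a distinction the paper's statement and proof gloss over. The alternative argument via Theorem~\ref{thm:optimal_mechanism} is fine but, as you note, covers only regular instances, so the payment-replacement argument is the one that proves the lemma in full generality.
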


\section{The Optimal Mechanism in General Cases}
The optimal mechanism provided in section \ref{sec:optimal}  relies crucially on the regularity condition, i.e., the monotonicity of $\phi_b^-(t)$ and $\frac{k\phi_s^+(q) - \alpha_2(q)}{\alpha_1(q)}$. Without this condition, the optimal mechanism would not even be feasible since it violates either \eqref{eq:buyer_monotone} or \eqref{eq:seller_monotone}. In this section, we extend our results to irregular cases. For ease of presentation, we define $\psi(t)$ and $\varphi(q) = \frac{k\phi_s^+(q) - \alpha_2(q)}{\alpha_1(q)}$.

According to Lemma \ref{lem:revenue}, the revenue function of the mediator for any feasible mechanism $(\pi, P_b, P_s)$ can be rewritten as:
\begin{align*}
    Rev(\pi,P_b,P_s) 
    = \int_{q_1}^{q_2} \int_{t_1}^{t_2}\pi(t, q)f(t)g(q)\alpha_1(q)[\psi(t) - \varphi(q) ]\,\mathrm{d}t\mathrm{d}q.
\end{align*}

Now we suppose that both function $\psi(t)$ and $\varphi(q)$ are not monotone increasing. We apply the so-called ironing technique to obtain the optimal mechanism. We first present the standard ironing procedure.
\begin{definition}[Ironing]
    Let $\psi(t)$ be any non-monotone function. 
\begin{enumerate}
	\item Define a random variable $w = F(t)$ and let
	\begin{align*}
		h_b(w) = \psi(F^{-1}(w)),
	\end{align*}
	where $F^{-1}(w)$ is the inverse function of $F(t)$. Note that $F(t)$ is continuous and strictly increasing since we assume that the density function $f(t)$ is always strictly positive. Thus the inverse function $F^{-1}$ is also continuous and increasing.
	\item Let $H_b:[0, 1] \mapsto \mathbb{R}$ be the integral of $h_b(w)$:
	\begin{align*}
		H_b(w) = \int_0^w h_b(r) \,\mathrm{d}r.
	\end{align*}  
	\item Let $L_b:[0, 1]\mapsto \mathbb{R}$ be the convex hull of the function $H_b$:
	\begin{align*}
		L_b(w) = \min \{ \lambda H_b(w_1) + (1-\lambda)H_b(w_2) \},
	\end{align*}
	where $\lambda, w_1, w_2 \in [0, 1]$ and $\lambda w_1 + (1-\lambda)w_2 =w$.
	\item Define $l_b:[0, 1] \mapsto \mathbb{R}$ such that
	\begin{align*}
		l_b(w) = L_b'(w).
	\end{align*}
	\item Then we obtain the ironed function $\bar{\psi}$:
	\begin{align*}
		\bar{\psi}(t) = l_b(w) = l_b(F(t)).
	\end{align*}
\end{enumerate}
\end{definition}

By definition, we have $h_b(F(t)) = \psi(t), l_b(F(t)) = \bar{\psi}(t)$. The mediator's revenue function can be rewritten as
\begin{align*}
	&\int_{q_1}^{q_2} \int_{t_1}^{t_2}\pi(t, q)f(t)g(q)\alpha_1(q)[\psi(t) - \varphi(q) ]\,\mathrm{d}t\mathrm{d}q\\
	=& \int_{q_1}^{q_2} \int_{t_1}^{t_2}\pi(t, q)f(t)g(q)\alpha_1(q)[\bar{\psi}(t) - \varphi(q) ]\,\mathrm{d}t\mathrm{d}q\\
	+&\int_{q_1}^{q_2} \int_{t_1}^{t_2}\pi(t, q)f(t)g(q)\alpha_1(q)[h_b(F(t)) - l_b(F(t))]\,\mathrm{d}t\mathrm{d}q
\end{align*}
We can simplify the second term above via integration by parts:
\begin{align*}
	&\int_{q_1}^{q_2} \int_{t_1}^{t_2}\pi(t, q)f(t)g(q)\alpha_1(q)[h_b(F(t)) - l_b(F(t))]\,\mathrm{d}t\mathrm{d}q\\
	=& \int_{t_1}^{t_2} [h_b(F(t)) - l_b(F(t))] R_b^{\pi}(t) \,\mathrm{d}F(t)\\
	=&[H_b(F(t)) - L_b(F(t))] R_b^{\pi}(t)|_{t_1}^{t_2} - \int_{t_1}^{t_2} [H_b(F(t)) - L_b(F(t))] \,\mathrm{d} R_b^{\pi}(t).
\end{align*}

By definition, $L_b$ is the convex hull of $H_b$, then we have $L_b(0)=H_b(0)$ and $L_b(1) = H_b(1)$. Therefore, the first term is 0, and the revenue function becomes:
\begin{align}
\begin{aligned}
\label{eq:iron buyer}
	Rev(\pi, P_b, P_s) =& \int_{q_1}^{q_2} \int_{t_1}^{t_2}\pi(t, q)f(t)g(q)\alpha_1(q)[\bar{\psi}(t) - \varphi(q) ]\,\mathrm{d}t\mathrm{d}q\\
	&-\int_{t_1}^{t_2} [H_b(F(t)) - L_b(F(t))] \,\mathrm{d} R_b^{\pi}(t).
\end{aligned}
\end{align}

To deal with function $\varphi(q)$, we modify the standard ironing by changing the definition of $w$ to the following:
\begin{gather*}
    w(q) = \int_{q_1}^q\alpha_1(r)g(r) \,\mathrm{d}r.
\end{gather*}
Note that $\alpha_1(q)>0$ for all $q\in [q_1, q_2]$ and that $\alpha_1(q)>0$. This means $w(q)$ is a strictly increasing function, and thus has an inverse function $w^{-1}:[0, \bar{w}] \mapsto [q_1, q_2]$, where $\bar{w} = \int_{q_1}^{q_2}\alpha_1(r)g(r) \,\mathrm{d}r$.

Let $h_s(w)$, $H_s(w)$, $l_s(w)$ and $L_s(w)$ be the corresponding functions in the ironing procedure when ironing function $\varphi(q)$. Define $\bar{\varphi}(q)=l_s(w(q))$.

By definition, we have $h_s(w(q)) = \varphi(q)$ and $l_s(w(q)) = \bar{\varphi}(q)$. Similar to the case of ironing the function $\psi(t)$, the first term of equation \eqref{eq:iron buyer} can be written as:
\begin{align*}
	&\int_{q_1}^{q_2} \int_{t_1}^{t_2}\pi(t, q)f(t)g(q)\alpha_1(q)[\bar{\psi}(t) - \varphi(q) ]\,\mathrm{d}t\mathrm{d}q\\
	=&\int_{q_1}^{q_2} \int_{t_1}^{t_2}\pi(t, q)f(t)g(q)\alpha_1(q)[\bar{\psi}(t) - \bar{\varphi}(q) ]\,\mathrm{d}t\mathrm{d}q\\
	&+\int_{q_1}^{q_2} \int_{t_1}^{t_2}\pi(t, q)f(t)g(q)\alpha_1(q)[l_s(w(q)) - h_s(w(q)) ]\,\mathrm{d}t\mathrm{d}q
\end{align*}
We can simplify the second term of the right-hand side as follows:
\begin{align*}
	&\int_{q_1}^{q_2} \int_{t_1}^{t_2}\pi(t, q)f(t)g(q)\alpha_1(q)[l_s(w(q)) - h_s(w(q)) ]\,\mathrm{d}t\mathrm{d}q\\
	=&\int_{0}^{\bar{w}} \int_{t_1}^{t_2}\pi(t, w^{-1}(\beta))f(t)[l_s(\beta) - h_s(\beta) ]\,\mathrm{d}t\mathrm{d}w\\
	=&\int_{q_1}^{q_2}[l_s(w(q)) - h_s(w(q))] R_s^{\pi}(q) \,\mathrm{d}w(q) \\
	=&[L_s(w(q)) - H_s(w(q))] R_s^{\pi}(q)|_{q_1}^{q_2} - \int_{q_1}^{q_2}[L_s(w(q)) - H_s(w(q))] \,\mathrm{d} R_s^{\pi}(q)
\end{align*}
$L_s(0) = H_s(0)$ and $L_s(\bar{w}) = H_s(\bar{w})$ since $L_s$ is the convex hull of $H_s$. Thus the first term is equal to 0. Overall, the revenue function of the mediator can be written as
\begin{align}
\begin{aligned}
\label{eq:iron rev}
	Rev(\pi,P_b, P_s) =& \int_{q_1}^{q_2} \int_{t_1}^{t_2}\pi(t, q)f(t)g(q)\alpha_1(q)[\bar{\psi}(t) - \bar{\varphi}(q) ]\,\mathrm{d}t\mathrm{d}q\\
	&-\int_{t_1}^{t_2} [H_b(F(t)) - L_b(F(t))] \,\mathrm{d} R_b^{\pi}(t)\\
	&-\int_{q_1}^{q_2}[L_s(W(q)) - H_s(W(q))] \,\mathrm{d} R_s^{\pi}(q).
\end{aligned}	
\end{align}

Now we present the optimal mechanism for this case, which also has a threshold structure.

\begin{theorem}
For any irregular case, the threshold mechanism with the following threshold functions and payment functions is an optimal mechanism:
\begin{gather*}
    \lambda(t) = \bar{\psi}(t)\quad \text{and}\quad \eta(q) = \bar{\varphi}(q),
\end{gather*}
\begin{align*}
    P_b^*(t) = &\frac{1}{\int_{q\in Q} \pi^*(t, q)g(q)\,\mathrm{d}q} \left[ \int_{q\in Q}\pi^*(t, q) v(t,q) g(q) \,\mathrm{d}q \right. \\
    &\left. - \int_{t_1}^t \int_{q\in Q}\alpha_1(q)\pi^*(x, q)g(q) \,\mathrm{d}q \mathrm{d}x \right],\\
    P_s^*(q) = &\frac{1}{\int_{t\in T}\pi^*(t, q)f(t) \,\mathrm{d}t} \left[\int_{t\in T}\pi^*(t, q) r(q) f(t) \,\mathrm{d}t \right.\\
    &\left. + k\int_{q}^{q_2} \int_{t\in T}\pi^*(t, x)f(t) \,\mathrm{d}t \mathrm{d}x \right]
\end{align*}
\label{thm:optimal_mechanism_irregular}
\end{theorem}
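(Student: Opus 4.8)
The plan is to reuse the architecture of the proof of Theorem~\ref{thm:optimal_mechanism}, but with the now non-monotone functions $\psi$ and $\varphi$ replaced everywhere by their ironed versions $\bar\psi$ and $\bar\varphi$, which are monotone by construction and therefore play exactly the role that regularity played before. (If only one of $\psi,\varphi$ is non-monotone, iron only that one; ironing a monotone function returns it, so nothing changes.) Concretely I would (i) upper bound the revenue of every feasible mechanism by the \emph{ironed virtual surplus} $\int_{q_1}^{q_2}\int_{t_1}^{t_2}\pi(t,q)f(t)g(q)\alpha_1(q)[\bar\psi(t)-\bar\varphi(q)]\,\mathrm{d}t\,\mathrm{d}q$; (ii) observe that this bound is maximized, pointwise in $(t,q)$, by the threshold rule that recommends trade exactly when $\bar\psi(t)\ge\bar\varphi(q)$; and (iii) check that $(\pi^{*},P_b^{*},P_s^{*})$ is feasible and attains the bound.

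\emph{The upper bound.} Starting from the revenue identity of Lemma~\ref{lem:revenue} and carrying out the two integration-by-parts steps that produce Equation~\eqref{eq:iron rev}, I would write the revenue of an arbitrary feasible mechanism as the ironed virtual surplus, minus $U_b(t_1)+SU(q_2)$ (each nonnegative by the IR constraints \eqref{eq:IR property buyer}--\eqref{eq:IR property seller}), minus two ironing corrections. The buyer correction $\int_{t_1}^{t_2}[H_b(F(t))-L_b(F(t))]\,\mathrm{d}R_b^{\pi}(t)$ is nonnegative because $H_b\ge L_b$ (convex-hull inequality) and $R_b^{\pi}$ is non-decreasing (feasibility, \eqref{eq:buyer_monotone}); the seller correction is nonnegative by the mirror argument, using $H_s\ge L_s$ and that $R_s^{\pi}$ is non-increasing (\eqref{eq:seller_monotone}), the two sign reversals cancelling. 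Hence the revenue is at most the ironed virtual surplus, and since $\alpha_1(q)f(t)g(q)>0$ this integrand is maximized over $\pi(t,q)\in[0,1]$ precisely by setting $\pi(t,q)=1$ when $\bar\psi(t)\ge\bar\varphi(q)$ and $0$ otherwise, i.e.\ by $\pi^{*}$; this gives a uniform upper bound on all feasible mechanisms, equal to the ironed virtual surplus of $\pi^{*}$.

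\emph{Feasibility and tightness.} I would first observe that $\bar\psi$ and $\bar\varphi$ are non-decreasing: $L_b,L_s$ are convex, so $l_b=L_b'$ and $l_s=L_s'$ are non-decreasing, and they are composed with the strictly increasing maps $F(t)$ and $w(q)=\int_{q_1}^{q}\alpha_1(r)g(r)\,\mathrm{d}r$; hence $\lambda=\bar\psi$ and $\eta=\bar\varphi$ are admissible monotone threshold functions. Feasibility of $(\pi^{*},P_b^{*},P_s^{*})$ then follows exactly as in Lemma~\ref{lem:feasible}: the threshold structure together with monotonicity of $\bar\psi$ in $t$ and $\bar\varphi$ in $q$ makes $R_b^{\pi^{*}}$ non-decreasing and $R_s^{\pi^{*}}$ non-increasing, and substituting the closed-form $P_b^{*},P_s^{*}$ into $U_b$ and $SU$ reproduces the envelope identities \eqref{eq:IC property buyer}--\eqref{eq:IC property seller} with $U_b(t_1)=SU(q_2)=0$, which simultaneously discharges the remaining IR constraints and zeroes the $U_b(t_1)+SU(q_2)$ term. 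Finally I would show the two ironing corrections vanish at $\pi^{*}$: on any maximal interval where $H_b(F(t))>L_b(F(t))$ the convex hull $L_b$ is affine, so $l_b$, and hence $\bar\psi$, is constant on the corresponding interval of $t$; there $\pi^{*}(t,q)$, being $1$ exactly when $\bar\psi(t)\ge\bar\varphi(q)$, does not depend on $t$, so $R_b^{\pi^{*}}$ is constant and $\mathrm{d}R_b^{\pi^{*}}$ places no mass on that interval, while $H_b-L_b=0$ off such intervals; hence the buyer correction is $0$, and the seller correction is $0$ by the same argument applied to the flat pieces of $\bar\varphi$. Combining, $Rev(\pi^{*},P_b^{*},P_s^{*})$ equals the ironed virtual surplus of $\pi^{*}$ and therefore meets the upper bound, so the mechanism is optimal.

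The step I expect to be the main obstacle is the vanishing of the ironing corrections: it needs the correct measure-theoretic handling of the Stieltjes integrals against $\mathrm{d}R_b^{\pi^{*}}$ and $\mathrm{d}R_s^{\pi^{*}}$, together with the structural fact that the threshold allocation is constant in $t$ (resp.\ in $q$) precisely on the intervals where $H_b>L_b$ (resp.\ $H_s>L_s$). Everything else is a faithful transcription of the regular-case proof with $\bar\psi,\bar\varphi$ in place of $\phi_b^{-}$ and $\varphi$; in particular the boundary terms from the two integrations by parts already vanish since $L_b,L_s$ agree with $H_b,H_s$ at the endpoints of their domains, so no extra endpoint conditions are needed.
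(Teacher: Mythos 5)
Your proposal is correct and follows essentially the same route as the paper's proof: rewrite the revenue via the two ironing integrations by parts, bound it using the convex-hull inequalities together with the monotonicity of $R_b^{\pi}$ and $R_s^{\pi}$, and show the ironed threshold rule attains the bound because $\bar{\psi}$ (resp.\ $\bar{\varphi}$) is constant on ironed intervals, making $\mathrm{d}R_b^{\pi^*}$ (resp.\ $\mathrm{d}R_s^{\pi^*}$) vanish there. Your write-up is in fact slightly more complete than the paper's, since you explicitly carry the $U_b(t_1)+SU(q_2)$ terms and verify feasibility of $(\pi^*,P_b^*,P_s^*)$ via the argument of Lemma~\ref{lem:feasible}, steps the paper leaves implicit.
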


\begin{proof}

To prove the optimality of the above mechanism, we show that the mechanism $(\pi^*, P_b^*,P_s^*)$ maximizes all the terms in Equation \eqref{eq:iron rev} simultaneously. The first term is maximized by $\pi^*$, as $\pi^*(t, q) = 1$ only when $\bar{\psi}(t) - \bar{\varphi}(q) \ge 0$. Moreover, $(\pi^*, P_b^*, P_s^*)$ satisfies $U_b(t_1) = 0$ and $U_s(q_2) = 0$, indicating that it also maximizes the last two terms since Lemma \ref{lem:property} implies that $U_b(t_1)\ge 0$ and $U_s(q_2) \ge 0$ hold for all feasible mechanisms. 

As for the second term, it is worth noting that $H_b(F(t)) - L_b(F(t)) \ge 0$ since $L_b(w)$ is the convex hull of $H_b(w)$. Additionally, based on Lemma \ref{lem:property}, we know that $\mathrm{d} R_b^{\pi}(t) \ge 0$, which means this term is always non-negative. Therefore, in order to show that the second term is maximized, it suffices to prove that this term equals to 0. Actually, it is only necessary to consider cases where $H_b(F(t)) - L_b(F(t)) > 0$. In such cases, $t$ must fall within an ironed interval $I$, thus function $L_b(w)$ is linear in the interval $I$. This implies $l_b(w) = \bar{\psi}(t)$ is a constant and thus $R_b^{\pi}(t) = \int_{q: \bar{\varphi}(q)\le \bar{\psi}(t)} \alpha_1(q) g(q) \,\mathrm{d}q$ is also a constant in the interval $I$, leading to $\mathrm{d}R_b^{\pi^*}(t) = 0$.

Similarly, for the third term, we observe that $L_s(w(q)) - H_s(w(q)) \le 0$ since $L_s(w)$ is a convex hull of $H_s(w)$. According to Lemma \ref{lem:property}, $\mathrm{d} R_s^{\pi}(q) \le 0$ holds for all feasible mechanisms. Consequently, this term is also non-negative. We show that in the optima mechanism $(\pi^*,P_b^*, P_s^*)$, this term equals to 0. Consider cases where $L_s(w(q)) - H_s(w(q)) <0$. In such cases, $q$ must lie in an ironed interval $I$ and thus $L_s(w)$ is linear in the ironing interval. This implies $l_s(w) = \bar{\varphi}(q)$ is a constant. Then $R_s^{\pi^*}(q) = \int_{t:\bar{\psi}(t) \ge \bar{\varphi}(q)} f(t) \,\mathrm{d}t$ is also a constant in the interval $I$, leading to $\mathrm{d} R_s^{\pi^*}(q) = 0$.

In conclusion, the mechanism $(\pi^*, P_b^*, P_s^*)$ optimizes all terms in the Equation \eqref{eq:iron rev} simultaneously, proving it to be an optimal feasible mechanism.
\end{proof}

\section{Discussion}
Our main result reveals some interesting facts about the optimal mechanism.
\begin{fact}
	For a regular problem instance, the higher the seller's type is, the less likely their item will be sold.
\end{fact}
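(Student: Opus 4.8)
The plan is to extract the ``probability of sale'' for a seller of type $q$ directly from the threshold description of the optimal mechanism in Theorem~\ref{thm:optimal_mechanism}, and then show this quantity is monotone non-increasing in $q$. Concretely, for a seller reporting $q$ the probability that a trade is recommended---and hence, since the optimal mechanism is direct, incentive compatible and obedient, that a trade actually occurs---is exactly $R^{\pi^*}_s(q) = \int_{t\in T}\pi^*(t,q)f(t)\,\mathrm{d}t$ from \eqref{eq:weighted seller}. By the threshold form, $\pi^*(t,q)=1$ if and only if $\lambda(t)=\phi^-_b(t)\ge \eta(q)=\frac{k\phi^+_s(q)-\alpha_2(q)}{\alpha_1(q)}$, so $R^{\pi^*}_s(q)=\int_{\{t\,:\,\phi^-_b(t)\ge\eta(q)\}} f(t)\,\mathrm{d}t = 1 - F(\ell(q))$, where $\ell(q) := \inf\{t\in T\,:\,\phi^-_b(t)\ge\eta(q)\}$, with the convention $\ell(q)=t_2$ (so the sale probability is $0$) when no such $t$ exists.

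Next I would invoke the regularity hypothesis, which states that both $\phi^-_b$ and $\eta$ are monotone non-decreasing. Monotonicity of $\eta$ implies that as $q$ increases the right-hand side of the threshold inequality weakly increases, so the super-level set $\{t\,:\,\phi^-_b(t)\ge\eta(q)\}$ can only shrink; equivalently $\ell(q)$ is non-decreasing, hence $1-F(\ell(q))$ is non-increasing. (Monotonicity of $\phi^-_b$ is what makes this super-level set an interval $[\ell(q),t_2]$, so that the shrinking is literal, but the set-inclusion argument does not even need it.) Therefore $R^{\pi^*}_s(q)$ is non-increasing in $q$, which is precisely the claim. As a consistency check, this matches Lemma~\ref{lem:property}: constraint \eqref{eq:seller_monotone} already forces $R^{\pi}_s$ to be non-increasing for every feasible mechanism, and Lemma~\ref{lem:feasible} guarantees $(\pi^*,P^*_b,P^*_s)$ is feasible; the computation above simply makes the dependence on $q$ explicit.

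I do not expect a serious obstacle here, since the statement is essentially a restatement of monotonicity already built into the feasibility characterization. The only points needing a line of care are (i) phrasing ``the item will be sold'' as ``a trade is recommended and followed'', which requires recalling that in the optimal direct mechanism obedience holds (the obedience constraints coincide with the IR constraints, as noted before \eqref{eq:IC buyer}), and (ii) the degenerate regions where $\eta$ is flat or where $\pi^*(\cdot,q)\equiv 0$ for large $q$, which are absorbed uniformly by the $\ell(q)=t_2$ convention. If a strict version is desired on the range of $q$ where trade has positive probability, I would additionally remark that strict monotonicity of $\eta$ there yields a strictly smaller interval $[\ell(q),t_2]$ and hence a strictly smaller sale probability wherever $f>0$; but the fact as stated only asserts the weak ``less likely'', so the weak argument suffices.
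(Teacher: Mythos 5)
Your argument is correct and is essentially the paper's own reasoning: the sale probability for seller type $q$ is $\Pr_{t\sim F}[\lambda(t)\ge\eta(q)]=R^{\pi^*}_s(q)$, and regularity (monotone non-decreasing $\eta$) makes this super-level-set probability non-increasing in $q$. Your write-up merely makes the paper's one-line justification explicit via $\ell(q)$ and the obedience remark, so no further changes are needed.
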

For a regular instance, the probability of $\lambda(t)\ge \eta(q)$ becomes smaller as $\eta(q)$ is monotone non-decreasing in $q$. Sellers with high-quality items also have higher reserve prices, which means that fewer buyers can afford to buy the item. 

\begin{fact}
	In the optimal mechanism, for each buyer of type $t$, there is a certain threshold $q'$, such that the buyer can trade with the seller only if the seller's type $q$ satisfies $q\le q'$.
	\label{fact:low_q}
\end{fact}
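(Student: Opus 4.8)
The plan is to read the statement directly off the threshold structure of the optimal mechanism, using only the monotonicity of the seller-side threshold function. Fix a buyer type $t \in T$. In the optimal mechanism, a trade occurs precisely when the mediator recommends ``trade'' and both agents follow the recommendation; since the mechanism is incentive compatible, both agents obey on the equilibrium path, so a trade with a seller of type $q$ occurs if and only if $\pi^*(t, q) = 1$. By the threshold form of $\pi^*$ (Theorem~\ref{thm:optimal_mechanism} in the regular case, Theorem~\ref{thm:optimal_mechanism_irregular} in the irregular case), this holds if and only if $\lambda(t) \ge \eta(q)$, i.e. $\eta(q) \le \lambda(t)$, where $\eta(q) = \frac{k\phi^+_s(q) - \alpha_2(q)}{\alpha_1(q)}$ in the regular case and $\eta(q) = \bar\varphi(q)$ in the irregular case.

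Next I would invoke the monotonicity of $\eta$: it is non-decreasing in $q$ by the regularity condition in the first case, and by construction of the ironed function in the second case (the derivative of a convex hull is non-decreasing). Setting
\[
    q'(t) = \sup\{\, q \in Q : \eta(q) \le \lambda(t) \,\},
\]
the set on the right is a down-set in $[q_1, q_2]$ because $\eta$ is non-decreasing, so whenever $q > q'(t)$ we have $\eta(q) > \lambda(t)$ and hence $\pi^*(t, q) = 0$: no trade is recommended and no trade can occur. This is exactly the ``only if'' direction claimed. When the qualifying set is empty, the buyer of type $t$ never trades and the statement holds vacuously for any choice of $q'$; when it is all of $[q_1, q_2]$, take $q'(t) = q_2$.

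I do not expect a genuine obstacle. The only points needing mild care are the degenerate cases above and the behaviour of $\eta$ at the single boundary point $q'(t)$ --- whether the supremum is attained depends on whether $\eta$ has a jump there --- but neither affects the one-directional claim. In short, the fact is an immediate corollary of the optimal $\pi^*$ being a threshold rule whose seller-side threshold $\eta$ is monotone, which is precisely the structure established in Theorems~\ref{thm:optimal_mechanism} and~\ref{thm:optimal_mechanism_irregular}.
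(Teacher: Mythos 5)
Your argument is correct and is essentially the paper's own reasoning: Fact~\ref{fact:low_q} is read directly off the threshold rule $\pi^*(t,q)=1 \iff \lambda(t)\ge\eta(q)$ together with the monotonicity of $\eta$ (from regularity, or from the ironing construction in the irregular case), with $q'(t)$ taken as the boundary of the resulting down-set of seller types. Your handling of the degenerate and boundary cases is a harmless refinement of what the paper leaves implicit.
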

Fact \ref{fact:low_q} shows that even a buyer with a high type may still buy a low-quality item, but it is impossible for a low-typed buyer to buy a high-quality item.

\begin{fact}
	In the optimal mechanism, the mediator may lose money for some type profiles $(t, q)$.
\end{fact}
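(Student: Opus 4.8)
The plan is to settle this existence statement by instantiating it in the concrete example already analyzed in Example~\ref{example:1}: I would exhibit a set of type profiles $(t,q)$, of positive Lebesgue measure, on which the mediator recommends trade ($\pi^*(t,q)=1$) but the per-profile net transfer $P_b^*(t)-P_s^*(q)$ is strictly negative. Because Example~\ref{example:1} already records closed forms for $\pi^*$, $P_b^*$ and $P_s^*$, the entire argument collapses to an elementary inequality together with a brief continuity remark.

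Concretely, the steps would be the following. (1)~Recall from Example~\ref{example:1} that trade is recommended exactly when $t\ge 2.5-\tfrac{1.5}{2q}$, that $P_b^*(t)=0.5+\tfrac{0.9375}{2.5-t}$ on the trading range, and that $P_s^*(q)=\tfrac{9q}{6-4q}\ln\tfrac{1.5}{q}$ for $q<1.5$. (2)~For a quality $q$ near $q_1=1$, evaluate $P_b^*$ at the smallest type that trades with it, $t_{\min}(q)=2.5-\tfrac{1.5}{2q}$, obtaining the clean value $P_b^*(t_{\min}(q))=0.5+1.25\,q$. (3)~Compare the two sides at $q=1$: there $P_b^*(t_{\min}(1))=1.75$ while $P_s^*$, extended continuously, equals $\tfrac{9}{2}\ln\tfrac{3}{2}\approx 1.82$, so $P_b^*(t_{\min}(q))<P_s^*(q)$ for all $q$ in a right-neighborhood of $1$. (4)~Conclude, using continuity of $P_b^*$ and $P_s^*$, that the strict inequality persists for $t$ slightly above $t_{\min}(q)$ as well, so that these $(t,q)$ form a set of positive measure lying in the interior of the trade region, on which the mediator runs a deficit. (A single witness also works, e.g.\ $(t,q)=(1.8,\,1.05)$, which satisfies $t>2.5-\tfrac{1.5}{2q}$ and $P_b^*(1.8)\approx 1.84<P_s^*(1.05)\approx 1.87$.)

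I do not expect a real obstacle here: the only points needing care are verifying that the chosen profiles satisfy $t>2.5-\tfrac{1.5}{2q}$ (so that $\pi^*=1$ there) and invoking continuity so the deficit holds on a positive-measure set rather than at isolated profiles. I would also include the conceptual reason the phenomenon is unavoidable: in a revenue-maximizing mechanism $SU(q_2)=0$, so the identity \eqref{eq:IC property seller} of Lemma~\ref{lem:property} gives $SU(q)=k\int_q^{q_2}R_s^{\pi^*}(x)\,\mathrm{d}x>0$ for every $q$ below the top of the trading range; dividing by $R_s^{\pi^*}(q)$ yields $P_s^*(q)>r(q)$, i.e.\ the mediator is forced to pay every participating seller strictly above the reserve, while the buyers near the bottom of the trading range are charged essentially their (low) conditional expected value. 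At that ``corner'' the rent handed to the seller overturns the sign of $P_b^*(t)-P_s^*(q)$, and the explicit instance above merely makes this concrete.
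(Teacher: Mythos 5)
Your proposal is correct and follows essentially the same route as the paper: the paper also establishes this fact by revisiting Example~\ref{example:1} and pointing to the region where trade is recommended ($t\ge 2.5-\tfrac{1.5}{2q}$) yet $P_s^*(q)>P_b^*(t)$, which it illustrates graphically rather than with explicit witnesses. Your numerical instantiation (e.g.\ $P_b^*(t_{\min}(q))=0.5+1.25q$ versus $P_s^*(1)=\tfrac{9}{2}\ln\tfrac{3}{2}\approx 1.82$, and the witness $(1.8,1.05)$) checks out and simply makes the paper's figure-based argument fully explicit.
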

This happens whenever the mediator's payment to the seller is higher than the payment collected from the buyer, i.e., $P^*_s(q)>P^*_b(t)$. Let us re-consider Example \ref{example:1}. As shown in Figure \ref{fig:lose money}, the mediator will recommend trade in the area above the blue curve. And the equation $P^*_s(q)=P^*_b(t)$ is the orange curve. Therefore, in the area between the blue curve and the orange curve, the mediator recommends trade but loses money. 

\begin{figure}[ht]
\centering
\includegraphics[width=0.95\linewidth]{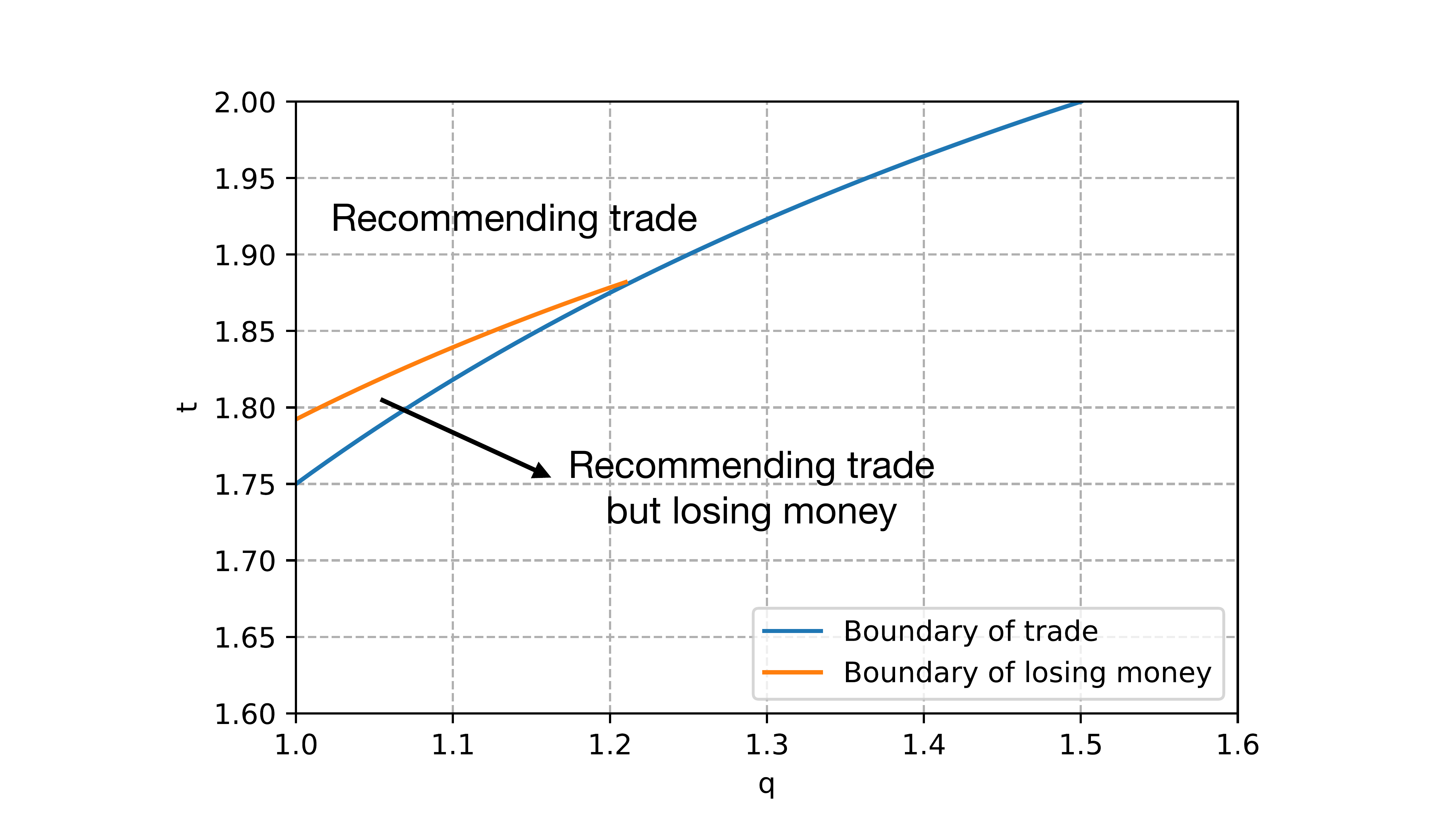}
\caption{The threshold mechanism in Example \ref{example:1}. } 
\label{fig:lose money}
\end{figure}

\section{Conclusion}
We studied the problem of designing revenue-maximizing mechanisms for the mediator. We formally defined the set of mechanisms that can be used by the mediator, and proved that the mediator can, without loss of generality, consider the set of direct and incentive compatible mechanisms. After that, we formulated this problem as a mathematical program, and gave an optimal solution in close-form under the regularity condition. We found that our optimal mechanism belongs to a category of \emph{threshold mechanisms}. Additionally, we extended our results to general cases by utilizing a variant version of the ironing technique proposed by~\cite{myerson1981optimal}. In the end, we discussed some interesting properties of the optimal mechanism.

\bibliographystyle{apalike}
\bibliography{bib}




\clearpage
\appendix
\section*{Appendix}
\section{Omitted Proofs in Section \ref{sec:space}}
\subsection{Proof of Theorem \ref{theorem:revelation}}

\begin{proof}
We construct a direct mechanism that simulates the original mechanism, thus giving the agents the same expected utilities and the mediator the same revenue. We can, without loss of generality, assume that the agents are always willing to trade if the mediator recommends ``trade'', since otherwise, the mediator can simply re-label the ``trade'' recommendation in this case as ``not trade'' without affecting the final outcome.

Let $M$ be any general mechanism and $\Gamma$ the induced game. Let $(\beta^*, \mu^*)$ be any PBE of $\Gamma$. Let $Z(t, q, \sigma)$ be the set of terminal nodes $z$ where the mediator recommends signal $\sigma$ along the path from the root to $z$ when the type profile is $(t,q)$. 
We construct the following direct mechanism:
\begin{enumerate}
    \item The mediator asks the buyer and the seller to report their types $t$ and  $q$;
    \item Let $\Sigma = \{0,1\}$ be a message set, and set $\pi(1 |t, q) =  \sum_{z\in Z(t,q,1)}\rho(z|\beta^*)$ to be the probability of recommending ``trade'';
    \item When the mediator recommends ``trade'', they charge the buyer $P_b(t)$ and pay the seller $ P_s(q)$, where
    \begin{gather*}
        P_b(t) = \frac{\E_{q \sim g}[\sum_{z\in Z(t,q)}\rho(z|\beta^*)\tau_b(z)]}{\E_{q \sim g}[\pi(1 |t, q)] },  \\
        P_s(q) = \frac{\E_{t \sim f} [\sum_{z\in Z(t,q)}\rho(z|\beta^*)\tau_s(z)] }{E_{t \sim f}[\pi(1 |t, q)] },
    \end{gather*}
    where $Z(t,q)=Z(t, q,0)\cup Z(t,q,1)$.
\end{enumerate}

We show that the new mechanism is indeed equivalent to the original one. In the new mechanism, the strategies of the buyer and the seller is simply to choose a type to report to the mediator. 

We start by checking that as long as the buyer and seller participate in the mechanism and report their type truthfully, their expected utilities are exactly the same in both two mechanisms. For simplicity, we only prove the statement for the buyer, as the argument for the seller is essentially the same. In the original mechanism, we have that $\sum_{z\in Z(t,q)}\rho(z|\beta^*)=1$. The expected utility of the buyer is:
\begin{align*}
    &\E_{q\sim g}\left[\sum_{z\in Z(t,q)}\rho(z|\beta^*)u_b(z)\right]\\
    =&\E_{q\sim g}\left[\sum_{z\in Z(t, q,0)}\rho(z|\beta^*)u_b(z)+\sum_{z\in Z(t, q,1)}\rho(z|\beta^*)u_b(z)\right]\\
    =&\E_{q\sim g}\left[\sum_{z\in Z(t, q,1)}\rho(z|\beta^*)v(q,t)-\sum_{z\in Z(t,q)}\rho(z|\beta^*)\tau_b(z)\right],
\end{align*}
where the last equation is obtained by plugging in the buyer's utility function.

In the new mechanism, the mediator's signal set only contains two signals, 1 for ``trade'' and 0 for ``not trade''. Let $g(q|t,1)$ be the buyer's belief of the seller's type upon receiving signal 1. We have
\begin{align*}
    g(q|t,1)=\frac{\pi(1|t,q)g(q|t)}{ \int_q\pi(1|t,q)g(q|t)\,\dd q}=\frac{\pi(1|t,q)g(q)}{ \int_q\pi(1|t,q)g(q)\,\dd q}.
\end{align*}
Therefore, the buyer's expected utility in this case is:
\begin{align*}
    &\E_{q\sim g(q|t,1)}\left[v(t,q)\right]-P_b(t)\\
    =&\frac{\int_q\pi(1|t,q)g(q)v(t,q)\,\dd q-\E_{q \sim g}\left[\sum_{z\in Z(t,q)}\rho(z|\beta^*)\tau_b(z)\right]}{\int_q\pi(1|t,q)g(q)\,\dd q}.
\end{align*}
When receiving signal 0, the buyer's expected utility is simply 0, as there is no trade and the mediator charges nothing. Therefore, the expected utility of a buyer with type $t$ in the new mechanism is:
\begin{align*}
    &\pi(1|t)\left[\E_{q\sim g(q|t,1)}\left[v(t,q)\right]-P_b(t)\right]\\
    =&\int_q\pi(1|t,q)g(q)v(t,q)\,\dd q-\E_{q \sim g}\left[\sum_{z\in Z(t,q)}\rho(z|\beta^*)\tau_b(z)\right]\\
    =&\E_{q\sim g}\left[\sum_{z\in Z(t,q,1)}\rho(z|\beta^*)v(q,t)-\sum_{z\in Z(t,q)}\rho(z|\beta^*)\tau_b(z)\right],
\end{align*}
where $\pi(1|t)$ is the probability that a buyer of type $t$ receives signal 1, and the last equation is due to the construction of $\pi(1|t,q)$. 

The above analysis shows that truthful reporting gives the buyer the same expected utility in the two mechanisms.

Now we prove that, for any agent, misreporting their type leads to the same utility as if they pretend to have the misreported type in the original mechanism. Let $\beta'_b$ be the equilibrium strategy of a buyer of type $t'$, and $\beta'=(\beta'_b,\beta^*_s)$ be the corresponding strategy profile. If a buyer of type $t$ uses $\beta'_b$ (pretends to have type $t'$), their expected utility in the original mechanism is:
\begin{align*}
    \E_{q\sim g}\left[\sum_{z\in Z(t',q,1)}\rho(z|\beta')v(q,t')-\sum_{z\in Z(t',q)}\rho(z|\beta')\tau_b(z)\right].
\end{align*}
One can verify that this is also the expected utility of the buyer reporting $t'$ in the new mechanism.

We can also argue similarly for the seller. It follows that if any player can profit from misreporting in the new mechanism, they could also obtain a higher utility in the original one by pretending to have the misreported type, which is impossible as $\beta^*$ is the equilibrium in the original mechanism. Therefore, the new mechanism is incentive compatible.

\end{proof}

\section{Omitted Proofs in Section \ref{sec:optimal}}
\subsection{Proof of Lemma \ref{lem:property}}

\begin{proof}
For simplicity, we only prove for the buyer but omit similar proof for the seller.

We first prove that the constraints are necessary for the buyer. Slightly manipulating the buyer's IC constraint \eqref{eq:IC buyer} yields:
\begin{align*}
    U_b(t) &\ge U_b(t') + \int_{q\in Q} \pi(t',q)[v(t,q) - v(t', q)] g(q) \,\mathrm{d}q \\
    &= U_b(t') + (t-t')\int_{q\in Q} \alpha_1(q) \pi(t',q) g(q) \,\mathrm{d}q.
\end{align*}

Note that $R_b^{\pi}(t) = \int_{q \in Q} \alpha_1(q) \pi(t, q)g(q)  \,\mathrm{d}q$, thus the above inequality is equivalent to:
\begin{gather}
\label{eq:buyer_ic_1}
    U_b(t) - U_b(t') \ge (t-t')R^{\pi}_b(t').
\end{gather}
The above equation should hold for any $t$ and $t'$. As a result, if we switch $t$ and $t'$, we should have:
\begin{gather}
\label{eq:buyer_ic_2}
    U_b(t') - U_b(t) \ge (t'-t)R^{\pi}_b(t).
\end{gather}
Combining Equation \eqref{eq:buyer_ic_1} and \eqref{eq:buyer_ic_2} gives:
\begin{gather}
    (t-t') R^{\pi}_b(t') \le U_b(t) - U_b(t') \le (t-t')R^{\pi}_b(t).
\end{gather}
When $t>t'$, we can divide the above inequality by $t-t'$ and get:
\begin{gather}
    R^{\pi}_b(t') \le \frac{U_b(t) - U_b(t')}{t - t'} \le R^{\pi}_b(t),
\end{gather}
Letting $t \rightarrow t'$, the above inequalities become:
\begin{gather}
    \frac{\mathrm{d} U_b(t)}{\mathrm{d}t} = R^{\pi}_b(t).
    \label{eq:utility_derivative}
\end{gather}
The above equation still holds if $t<t'$. Therefore, Equation \eqref{eq:IC property buyer} follows.

Adding Equation \eqref{eq:buyer_ic_1} and \eqref{eq:buyer_ic_2} gives $(t-t')[R_b^{\pi}(t)-R_b^{\pi}(t')]\ge 0$,
which implies constraint \eqref{eq:buyer_monotone}.

Now let's consider the IR constraint for the buyer. By definition, $R^{\pi}_b(t)$ is non-negative. Thus, with Equation \eqref{eq:IC property buyer}, we know that $U_b(t)$ attains its minimum value at $t_1$. Therefore, to satisfy the IR constraint, we only need to ensure $U_b(t_1)\ge 0$, which is exactly Equation \eqref{eq:IR property buyer}.

We omit the proof for constraint \eqref{eq:seller_monotone}, Equation \eqref{eq:IC property seller}, and Equation \eqref{eq:IR property seller}, as they can be obtained through similar analysis for the seller.

Now we show that the constraints are also sufficient for the buyer. The IC constraint \eqref{eq:IC buyer} for the buyer is equivalent to:
\begin{gather*}
    U_b(t) \ge U_b(t') + (t-t')\int_{q \in Q} \alpha_1(q) \pi(t',q)g(q) \,\mathrm{d}q.
\end{gather*}
Constraint \eqref{eq:IC property buyer} implies the above inequality, because if $t' < t$, we have
\begin{align*}
    U_b(t)-U_b(t') = \int_{t'}^t R^{\pi}_b(x)\,\mathrm{d}x \ge (t -t') R^{\pi}_b(t').
\end{align*}
Similarly, when $t' > t$, we also have $U_b(t) - U_b(t') \ge  (t -t') R^{\pi}_b(t') $.

The IR constraint \eqref{eq:IR buyer} is equivalent to $U_b(t) \ge 0$. $R^{\pi}_b(t') \ge 0$ and $R^{\pi}_b(t)$ is monotone non-decreasing implies $R^{\pi}_b(t) \ge 0, \forall t$. Therefore, Equation \eqref{eq:IC property buyer} and \eqref{eq:IR property buyer} together shows that $U_b(t) \ge 0$ for any $t$.
\end{proof}

\subsection{Proof of Lemma \ref{lem:revenue}}
\begin{proof}
For any feasible mechanism $(\pi, P_b, P_s)$, the revenue of the mediator can be written as:
\begin{align}
\label{eq:revenue}
\begin{aligned}
    Rev(\pi,P_b,P_s) = &\int_{t\in T} f(t) \int_{q\in Q}g(q)\pi(t, q) P_b(t) \,\mathrm{d}q\mathrm{d}t -\\
    &-\int_{q\in Q}g(q) \int_{t \in T} f(t) \pi(t, q) P_s(q) \,\mathrm{d}t\mathrm{d}q .
\end{aligned}
\end{align}

Using Equation \eqref{eq:utility buyer}, the first term of the above equation can be written as:
\begin{align*}
    &\int_{t_1}^{t_2} f(t) \int_{q\in Q}g(q)\pi(t, q) P_b(t)\,\mathrm{d}q\mathrm{d}t \\
    =& \int_{t_1}^{t_2} f(t) \left[\int_{q\in Q}\pi(t, q) v(t,q) g(q)\,\mathrm{d}q - U_b(t)  \right] \,\mathrm{d}t, 
\end{align*}
where the second term containing $U_b(t)$ can be further expanded as follows:
\begin{align*}
    & \int_{t_1}^{t_2} f(t) U_b(t) \,\mathrm{d}t\\
    =& \int_{t_1}^{t_2} f(t) \left[ \int_{t_1}^t R^{\pi}_b(x)\,\mathrm{d}x + U_b(t_1)\right] \,\mathrm{d}t\\
    =&\int_{t_1}^{t_2}\int_{t_1}^t f(t)R^{\pi}_b(x)\,\mathrm{d}x\mathrm{d}t + U_b(t_1)\\
    =& \int_{t_1}^{t_2}\int_{x}^{t_2} f(t)R^{\pi}_b(x)\,\mathrm{d}t\mathrm{d}x + U_b(t_1)\\
    = &\int_{t_1}^{t_2}[1-F(x)]R^{\pi}_b(x)\,\mathrm{d}x + U_b(t_1)\\
    =& \int_{q\in Q} g(q)\left[\int_{t_1}^{t_2}[1-F(t)]\alpha_1(q) \pi(t, q) \,\mathrm{d}t \right]\,\mathrm{d}q + U_b(t_1).
\end{align*}
Plugging the term back into the above equation, and switching the order of integral gives:
\begin{align*}
    &\int_{t_1}^{t_2} f(t) \int_{q\in Q}g(q)\pi(t, q) P_b(t)\,\mathrm{d}q\mathrm{d}t\\
    =&\int_{q\in Q} g(q) \left[\int_{t_1}^{t_2} f(t)\pi(t, q)[\alpha_1(q) \phi^-_b(t)+\alpha_2(q) ] \,\mathrm{d}t \right]\,\mathrm{d}q \\
    &- U_b(t_1).
\end{align*}

The second term of Equation \eqref{eq:revenue} is about the seller. According to Equation \eqref{eq:utility seller}, it can be written as:
\begin{align}
    &\int_{q_1}^{q_2} g(q) \int_{t\in T} f(t) \pi(t, q)P_s(q) \,\mathrm{d}t\mathrm{d}q \nonumber\\
    =&\int_{q_1}^{q_2} g(q) \left[\int_{t\in T}\pi(t, q)r(q)f(t)\,\mathrm{d}t + SU(q) \right] \,\mathrm{d}q,\label{eq:rev_seller}
\end{align}
where we can re-write the second term as:
\begin{align*}
    &\int_{q_1}^{q_2} g(q) SU(q) \,\mathrm{d} q\\
    =& \int_{q_1}^{q_2} g(q) \left[k\int_q^{q_2} R^{\pi}_s(x)\,\mathrm{d}x + SU(q_2) \right]\,\mathrm{d}q\\
    =&k\int_{q_1}^{q_2} g(q)\int_{q}^{q_2}R^{\pi}_s(x)\,\mathrm{d}x\mathrm{d}q + SU(q_2)\\
    =&k\int_{q_1}^{q_2}\int_{q_1}^{x}g(q)R^{\pi}_s(x)\,\mathrm{d}q\mathrm{d}x + SU(q_2)\\
    =&k\int_{q_1}^{q_2}G(x)R^{\pi}_s(x)\,\mathrm{d}x + SU(q_2)\\
    =&k\int_{t}f(t)\int_{q_1}^{q_2}G(q)\pi(t, q)\,\mathrm{d}q \mathrm{d}t + SU(q_2)
\end{align*}
Putting everything back to Equation \eqref{eq:rev_seller} yields:
\begin{align*}
    &\int_{q_1}^{q_2} g(q) \int_{t\in T} f(t) \pi(t, q)P_s(q) \,\mathrm{d}t\mathrm{d}q \\
    = &\int_{t \in T} f(t) \left[\int_{q_1}^{q_2}g(q)\pi(t, q) k\left(q +\frac{G(q)}{g(q)}\right)\,\mathrm{d}q\right]\,\mathrm{d}t + SU(q_2)\\
    =&\int_{t \in T} f(t) \left[\int_{q_1}^{q_2}g(q)\pi(t, q) k\phi^+_s(q)\,\mathrm{d}q\right]\,\mathrm{d}t + SU(q_2),
\end{align*}
By combining both two terms of Equation \eqref{eq:revenue}, we get:
\begin{align*}
    & Rev(\pi,P_b,P_s) \\
    = & \int_{q_1}^{q_2} \int_{t_1}^{t_2}\pi(t, q)f(t)g(q)[\alpha_1(q) \phi^-_b(t)  \\
    &+\alpha_2(q)- k \phi^+_s(q)]\,\mathrm{d}t\mathrm{d}q-U_b(t_1)-SU(q_2).
\end{align*}
\end{proof}

\subsection{Proof of Lemma \ref{lem:feasible}}
\begin{proof}
Observe that if a problem instance is regular, the recommendation scheme $\pi^*(t,q)$ is non-decreasing in $t$ but non-increasing in $q$. This implies that $R^\pi_b(t)$ is non-decreasing in $t$ and $R^\pi_s(q)$ is non-increasing in $q$. Satisfying constraint \eqref{eq:buyer_monotone} and \eqref{eq:seller_monotone}.

By definition, the buyer's utility is:
\begin{align*}
    U_b(t) =& \int_{q\in Q} \pi^*(t, q)[v(t,q) - P^*_b(t)] g(q)\,\mathrm{d}q \\
    =&\int_{t_1}^t \int_{q\in Q}\alpha_1(q)\pi^*(x, q)g(q) \,\mathrm{d}q \mathrm{d}x.
\end{align*}
This means $U_b(t_1)=0$ and that
\begin{align*}
	\frac{\dd U_b(t)}{\dd t}=\int_{q\in Q}\alpha_1(q)\pi^*(t, q)g(q) \,\mathrm{d}q=R_b^{\pi^*}(t).
\end{align*}
Therefore, Equation \eqref{eq:IC property buyer} follows.

Similarly, by definition, the seller's surplus is:
\begin{align*}
    SU(q) =& \int_{t\in T} \pi^*(t, q)[P_s^*(q) - r(q)]f(t)\,\mathrm{d}t \\
    =& k\int_{q}^{q_2} \int_{t\in T}\pi^*(t, x)f(t) \,\mathrm{d}t \mathrm{d}x.
\end{align*}
It follows that $SU(q_2)=0$ and that
\begin{align*}
	\frac{\dd SU(q)}{\dd q}=-k\int_{t\in T}\pi^*(t, q)f(t) \,\mathrm{d}t=-kR_s^{\pi^*}(q).
\end{align*}
These clearly imply Equation \eqref{eq:IC property seller}.
\end{proof}

\subsection{Proof of Theorem \ref{thm:optimal_mechanism}}
\begin{proof}[Proof of Theorem \ref{thm:optimal_mechanism}]
The revenue equation \eqref{eq:prove revenue} contains 3 terms. We now show that the mechanism $(\pi^*, P_b^*, P_s^*)$ optimizes them all at the same time. 

 By definition, $\pi^*(t,q)=1$ if and only if $\lambda(t)\ge \eta(q)$, which is equivalent to $\alpha_1(q) \phi^-_b(t)+\alpha_2(q)- k \phi^+_s(q)\ge 0$. This means that the first term is point-wisely optimized for all $(t,q)$ pairs, and thus the term is maximized.

For the second and third terms, if we want to maximize the revenue, we need to minimize both these two terms. According to Lemma \ref{lem:property}, both $U_b(t_1)$ and $SU(q_2)$ need to be non-negative. The proof of Lemma \ref{lem:feasible} shows that the mechanism $(\pi^*, P_b^*, P_s^*)$ satisfies $U_b(t_1)=0$ and $SU(q_2)=0$, which implies that both $U_b(t_1)$ and $SU(q_2)$ are already optimized.
\end{proof}

\subsection{Proof of Lemma \ref{lem: payment}}
\begin{proof}
We prove it by contradiction. According to Lemma \ref{lem:revenue}, given a feasible mechanism $(\pi, P_b, P_s)$, the mediator's expected revenue is:
    \begin{align}
    & Rev(\pi,P_b,P_s)\nonumber\\
    = & \int_{q_1}^{q_2} \int_{t_1}^{t_2}\pi(t, q)f(t)g(q)[\alpha_1(q) \phi^-_b(t)  \nonumber\\
    &+\alpha_2(q)- k \phi^+_s(q)]\,\mathrm{d}t\mathrm{d}q-U_b(t_1)-U_s(q_2).\label{eq:prove revenue}
    \end{align}

If $(\pi, P_b, P_s)$ is a revenue-maximizing mechanism and $U_b(t_1) >0$. We construct a new mechanism $(\pi, P'_b, P_s)$, in other words, a new payment function of the buyer. The way of construction is as follows:
    \begin{gather}\label{eq:constructed mechanism}
        P'_b(t) =
        \begin{cases}
            P_b(t) & \text{if $t \neq t_1$}\\
            P_b(t) + c & \text{otherwise}
        \end{cases},
    \end{gather}
where $c = \frac{1}{\int_{q\in Q} \pi(t_1, q)g(q) \,\mathrm{d}q } \int_{q\in Q} \pi(t_1, q) [v(t_1, q) - P_b(t_1)]$. We first show that the constructed mechanism $(\pi, P'_b, P_s)$ is also a feasible mechanism. Note that the monotone constraints \eqref{eq:buyer_monotone} and \eqref{eq:seller_monotone} only depend on $\pi$, and $\pi$ is unchanged in the constructed mechanism, thus the constructed mechanism satisfies these two constraints. Moreover, the utility of the seller in $q_2$ is also unchanged since $P_s$ is unchanged. So the constructed mechanism also satisfies constraint \eqref{eq:IR property seller}. Combining \eqref{eq:constructed mechanism} and \eqref{eq:utility buyer}, the utility of the buyer in $t_1$ is equal to $0$, i.e., $U_b(t_1) = 0$, which satisfies constraint \eqref{eq:IR property buyer}. Therefore, the constructed mechanism $(\pi, P'_b, P_s)$ is a feasible mechanism.

Then we show that the revenue of the constructed mechanism is greater than the original one. Note that the first term and last term in \eqref{lem:revenue} are unchanged since $\pi$ and $P_s$ are unchanged. And the second term in \eqref{lem:revenue} under the constructed mechanism is equal to $0$, which is less than that of the original mechanism. Thus the revenue of the constructed mechanism is greater than the original one, which contradicts the fact that $(\pi, P_b, P_s)$ is a revenue-maximizing mechanism. Therefore, a feasible mechanism $(\pi, P_b, P_s)$ is a revenue-maximizing mechanism, it must satisfy $U_b(t_1) = 0$. 
\end{proof}

\end{document}